\documentclass[USenglish,a4paper,11pt]{article}
\pdfoutput=1

\usepackage[T1]{fontenc}
\usepackage[margin=1in]{geometry}
\usepackage{authblk}
\usepackage{microtype}
\microtypecontext{spacing=nonfrench}
\usepackage[backref=true,style=alphabetic,citestyle=alphabetic,minalphanames=3,maxalphanames=4,maxbibnames=99,maxcitenames=99]{biblatex}
\usepackage{amsmath, amsthm, amssymb}
\usepackage{graphicx}
\usepackage{mathtools}
\usepackage{enumitem}
\usepackage[dvipsnames]{xcolor} 
\usepackage{booktabs}
\usepackage{bbm} 
\usepackage{tcolorbox} 
\usepackage{makecell}
\usepackage{mathrsfs}

\DeclareOldFontCommand{\sc}{\normalfont\scshape}{\mathsc}
\DeclareOldFontCommand{\tt}{\normalfont\ttfamily}{\mathtt}
\DeclareOldFontCommand{\it}{\normalfont\itshape}{\mathit}
\DeclareOldFontCommand{\bf}{\normalfont\bfseries}{\mathbf}

\usepackage{hyperref}
\hypersetup{
    pdftitle={Quantum Sub-Gaussian Mean Estimator},
    pdfauthor={Yassine Hamoudi},
    plainpages=false, 
    colorlinks=true, linkcolor=BrickRed, citecolor=MidnightBlue, filecolor=blue, urlcolor=OliveGreen}
\usepackage{bookmark} 


\theoremstyle{plain}
\newtheorem{theorem}{Theorem}[section]
\newtheorem{proposition}[theorem]{Proposition}
\newtheorem{corollary}[theorem]{Corollary}
\newtheorem{lemma}[theorem]{Lemma}

\theoremstyle{definition}
\newtheorem{definition}[theorem]{Definition}
\newenvironment{asspt}[1]{%
  \assumption
}{\endassumption}

\newtheoremstyle{restate}{}{}{\itshape}{}{\bfseries}{.}{.5em}{\thmnote{#3}}
\theoremstyle{restate}
\newtheorem*{rtheorem}{Theorem}

\newlist{enumresult}{enumerate}{1}
\setlist[enumresult,1]{label=(\arabic*), font=\normalfont\ttfamily}


\newcommand{\R}{\mathbb{R}}

\newcommand{\C}{\mathbb{C}}

\newcommand{\Hil}{\mathcal{H}}
\newcommand{\ra}{\rightarrow}

\newcommand{\rn}{\{0,1\}}
\newcommand{\eps}{\epsilon}
\newcommand\super[1]{^{(#1)}}
\newcommand\td[1]{\widetilde{#1}}

\newcommand\ha[1]{\widehat{#1}}
\newcommand\ind[1]{\mathbbm{1}_{#1}}
\newcommand{\qub}{\ket{\mathbf{0}}}   

\DeclareMathOperator{\median}{median}

\let\ket\relax
\let\braket\relax
\let\proj\relax
\providecommand\given{}
\DeclarePairedDelimiterX{\pt}[1](){\renewcommand\given{\nonscript\:\delimsize\vert\nonscript\:\mathopen{}}#1} 
\DeclarePairedDelimiterX{\bc}[1][]{\renewcommand\given{\nonscript\:\delimsize\vert\nonscript\:\mathopen{}}#1} 
\DeclarePairedDelimiterX{\bcr}[1][){\renewcommand\given{\nonscript\:\delimsize\vert\nonscript\:\mathopen{}}#1} 
\DeclarePairedDelimiter{\ceil}{\lceil}{\rceil}

\DeclarePairedDelimiter{\set}{\{}{\}}              
\DeclarePairedDelimiter{\abs}{\lvert}{\rvert}      
\DeclarePairedDelimiter{\norm}{\lVert}{\rVert}     

\DeclarePairedDelimiter{\ket}{|}{\rangle}
\DeclarePairedDelimiterX{\inp}[2]{\langle}{\rangle}{#1, #2}
\DeclarePairedDelimiterX{\ip}[2]{\langle}{\rangle}{#1\,\delimsize\vert\,\mathopen{}#2}
\DeclarePairedDelimiterX{\op}[2]{|}{|}{#1 \delimsize\rangle \delimsize\langle \mathopen{}#2}
\DeclarePairedDelimiterX{\braket}[3]{\langle}{\rangle}{#1\,\delimsize\vert\,\mathopen{}#2\,\delimsize\vert\,\mathopen{}#3}
\DeclarePairedDelimiterX{\proj}[1]{|}{|}{#1 \delimsize\rangle \delimsize\langle \mathopen{}#1}

\newcommand{\bo}{O\pt}
\newcommand{\wbo}{\widetilde{O}\pt}

\newcommand{\om}{\Omega\pt}
\newcommand{\wom}{\widetilde{\Omega}\pt}
\newcommand{\ta}{\Theta\pt}
\newcommand{\wta}{\widetilde{\Theta}\pt}

\newcommand{\Ex}{\mathbb{E}}
\newcommand{\Var}{\mathrm{Var}}
\newcommand{\ex}{\Ex\bc}
\newcommand{\var}{\Var\bc}
\newcommand{\pr}{\Pr\bc}

\newcommand{\proc}[1]{\textup{\textsf{#1}}} 



\newcommand{\mut}{\td{\mu}}
\newcommand{\muh}{\ha{\mu}}



\newcommand{\aamp}{\proc{AAmp}}
\newcommand{\aest}{\proc{AEst}}
\newcommand{\saamp}{\proc{Seq-AAmp}}
\newcommand{\saest}{\proc{Seq-AEst}}

\newcommand{\qrv}{q-random variable}
\newcommand{\qrvs}{q-random variables}
\newcommand{\subgauss}{\proc{SubGaussEst}}
\newcommand{\bern}{\proc{BernEst}}
\newcommand{\quant}{\proc{Quantile}}
\newcommand{\sbern}{\proc{Seq-BernEst}}

\newcommand{\ch}{\Delta} 
\newcommand{\nb}{n}
\newcommand{\mb}{m}


\newcommand{\algobox}[3]{
\renewcommand{\figurename}{Algorithm}
  \begin{figure}[htb] 
    \centering
    \begin{tcolorbox}
      #3
    \end{tcolorbox}
    \caption{#2}
    \label{#1}
  \end{figure}
\renewcommand{\figurename}{Figure}}


\addbibresource{Bibliography.bib}


\title{Quantum Sub-Gaussian Mean Estimator}
\author{Yassine Hamoudi}
\affil{Universit\'e de Paris, IRIF, CNRS, F-75006 Paris, France.\\\texttt{hamoudi@irif.fr}}
\date{\today}


\begin{document}

\maketitle


\begin{abstract}
  We present a new quantum algorithm for estimating the mean of a real-valued random variable obtained as the output of a quantum computation. Our estimator achieves a nearly-optimal quadratic speedup over the number of classical i.i.d. samples needed to estimate the mean of a heavy-tailed distribution with a sub-Gaussian error rate. This result subsumes (up to logarithmic factors) earlier works on the mean estimation problem that were not optimal for heavy-tailed distributions~\cite{BHMT02j,BDGT11p}, or that require prior information on the variance~\cite{Hei02j,Mon15j,HM19c}. As an application, we obtain new quantum algorithms for the $(\eps,\delta)$-approximation problem with an optimal dependence on the coefficient of variation of the input random variable.

\end{abstract}



\section{Introduction}

The problem of estimating the mean $\mu$ of a real-valued random variable $X$ given \emph{i.i.d.} samples from it is one of the most basic tasks in statistics and in the Monte Carlo method. The properties of the various classical \emph{mean estimators} are well understood. The standard non-asymptotic criterion used to assess the quality of an estimator is formulated as the following \emph{high probability deviation bound}: upon performing $\nb$ random experiments that return $\nb$ samples from $X$, and given a failure probability $\delta \in (0,1)$, what is the smallest error $\eps(\nb,\delta,X)$ such that the output~$\mut$ of the estimator satisfies $\abs{\mut - \mu} > \eps(\nb,\delta,X)$ with probability at most $\delta$? Under the standard assumption that the unknown random variable~$X$ has a finite variance $\sigma^2$, the best possible performances are obtained by the so-called \emph{sub-Gaussian estimators}~\cite{LM19j} that achieve the following deviation bound
  \begin{equation}
    \label{Eq:Gaussian}
    \pr*{\abs{\mut - \mu} > L \sqrt{\frac{\sigma^2 \log(1/\delta)}{\nb}}} \leq \delta
  \end{equation}
for some constant $L$. The term ``sub-Gaussian'' reflects that these estimators have a Gaussian tail even for non-Gaussian distributions. The most well-known sub-Gaussian estimator is arguably the \emph{median-of-means}~\cite{NY83b,JVV86j,AMS99j}, which consists of partitioning the $\nb$ samples into roughly $\log(1/\delta)$ groups of equal size, computing the empirical mean over each group, and returning the median of the obtained means.

The process of generating a random sample from $X$ is generalized in the quantum model by assuming the existence of a unitary operator $U$ where $U \qub$ coherently encodes the distribution of~$X$. A \emph{quantum experiment} is then defined as one application of this operator or its inverse. The celebrated quantum amplitude estimation algorithm~\cite{BHMT02j} provides a way to estimate the mean of any \emph{Bernoulli} random variable by performing fewer experiments than with any classical estimator. Yet, for general distributions, the existing quantum mean estimators either require additional information on the variance~\cite{Hei02j,Mon15j,HM19c} or are less performant than the classical sub-Gaussian estimators when the distribution is heavy tailed~\cite{BHMT02j,Ter99d,BDGT11p,Mon15j}. These results leave open the existence of a general quantum speedup for the mean estimation problem. We address this question by introducing the concept of \emph{quantum sub-Gaussian estimators}, defined through the following deviation bound
  \begin{equation}
    \label{Eq:QGaussian}
    \pr*{\abs{\mut - \mu} > L \frac{\sigma \log(1/\delta)}{\nb}} \leq \delta
  \end{equation}
for some constant $L$. We give the first construction of a quantum estimator that achieves this bound up to a logarithmic factor in $\nb$. Additionally, we prove that it is impossible to go below that deviation level. This result provides a clear equivalent of the concept of sub-Gaussian estimator in the quantum setting.

A second important family of mean estimators addresses the \emph{$(\eps,\delta)$-approximation} problem, where given a fixed relative error $\eps \in (0,1)$ and a failure probability $\delta \in (0,1)$ the goal is to output a mean estimate $\mut$ such that
  \begin{equation}
    \label{Eq:EpsDelta}
    \pr*{\abs{\mut - \mu} > \eps \abs{\mu}} \leq \delta.
  \end{equation}
The aforementioned sub-Gaussian estimators do not quite answer this question since the number of experiments they require (respectively $\nb = \om[\big]{\pt{\frac{\sigma}{\eps \mu}}^2 \log(1/\delta)}$ and $\nb = \wom[\big]{\frac{\sigma}{\eps \abs{\mu}} \log(1/\delta)}$) depends on the \emph{unknown} quantities $\sigma$ and $\mu$. Sometimes a good upper bound is known on the \emph{coefficient of variation} $\abs{\sigma/\mu}$ and can be used to parametrize a sub-Gaussian estimator. Otherwise, the standard approach is based on \emph{sequential analysis} techniques, where the number of experiments is chosen adaptively depending on the results of previous computations. Given a random variable distributed in $[0,1]$, the optimal classical estimators perform $\ta[\big]{\pt[\big]{\pt[\big]{\frac{\sigma}{\eps \mu}}^2 + \frac{1}{\eps \mu} }\log(1/\delta)}$ random experiments \emph{in expectation}~\cite{DKLR00j} for computing an $(\eps,\delta)$-approximation of $\mu$. We construct a quantum estimator that reduces this number to  $\wta[\big]{\pt[\big]{\frac{\sigma}{\eps \mu} + \frac{1}{\sqrt{\eps \mu}}}\log(1/\delta)}$ and we prove that it is optimal.


\subsection{Related work}

There is an extensive literature on classical sub-Gaussian estimators and we refer the reader to \cite{LM19j,Cat12j,BCL13j,DLLO16j,LV20p} for an overview of the main results and recent improvements. We point out that the \emph{empirical mean} estimator is not sub-Gaussian, although it is optimal for Gaussian random variables~\cite{SV05j,Cat12j}. The non-asymptotic performances of the empirical mean estimator are captured by several standard concentration bounds such as the Chebyshev, Chernoff and Bernstein inequalities.

There is a series of quantum mean estimators \cite{Gro98c,AW99p,BDGT11p} that get close to the bound $\pr[\Big]{\abs{\mut - \mu} > L \frac{\log(1/\delta)}{\nb}} \leq \delta$ for any random variable distributed in $[0,1]$ and some constant $L$. Similar results hold for numerical integration problems \cite{AW99p,Nov01j,Hei02j,TW02j,Hei03j}. The amplitude estimation algorithm~\cite{BHMT02j,Ter99d} leads to a sharper bound of $\pr[\Big]{\abs{\mut - \mu} > L \pt[\Big]{\frac{\sqrt{\mu(1-\mu)}\log(1/\delta)}{\nb} + \frac{\log(1/\delta)^2}{\nb^2}}} \leq \delta$ (see Proposition~\ref{Prop:ZeroOne}) when $X$ is distributed in $[0,1]$. Nevertheless, the quantity $\mu(1-\mu)$ is always larger than or equal to the variance $\sigma^2$. The question of improving the dependence on $\sigma^2$ was considered in~\cite{Hei02j,Mon15j,HM19c}. The estimators of~\cite{Hei02j,Mon15j} require to know an upper bound $\Sigma$ on the standard deviation $\sigma$, whereas \cite{HM19c} needs an upper bound $\ch$ on the coefficient of variation $\sigma/\mu$ (for non-negative random variables). The performances of these estimators are captured (up to logarithmic factors) by the deviation bound given in Equation~(\ref{Eq:QGaussian}) with $\sigma$ replaced by $\Sigma$ and $\mu\ch$ respectively.

The $(\eps,\delta)$-approximation problem has been addressed by several classical works such as \cite{DKLR00j,MSA08c,GNP13j,Hub19j}. In the quantum setting, there is a variant~\cite[Theorem 15]{BHMT02j} of the amplitude estimation algorithm that performs $\bo{\log(1/\delta)/(\eps \sqrt{\mu})}$ experiments in expectation to compute an $(\eps,\delta)$-approximate of the mean of a random variable distributed in $[0,1]$ (see Theorem~\ref{Thm:SeqAE} and Proposition~\ref{Prop:SeqZeroOne}). However, the complexity of this estimator does not scale with $\sigma$. Given an upper bound $\ch$ on $\sigma/\mu$, the estimator of~\cite{HM19c} can be used to compute an $(\eps,\delta)$-approximate with roughly $\wbo{\ch \log(1/\delta)/\eps}$ quantum experiments if the random variable is non-negative.

We note that the related problem of estimating the mean with \emph{additive} error $\eps$, that is $\pr{\abs{\mut - \mu} > \eps} \leq \delta$, has also been considered by several authors. The optimal number of experiments is $\ta{\log(1/\delta)/\eps^2}$ classically~\cite{CEG95j} and $\ta{1/\eps}$ quantumly~\cite{NW99c} (with failure probability $\delta = 1/3$). These bounds do not depend on unknown parameters (as opposed to the relative error case), thus sequential analysis techniques are unnecessary here. Montanaro~\cite{Mon15j} also described an estimator that performs $\wbo{\Sigma \log(1/\delta)/\eps}$ quantum experiments given an upper bound $\Sigma$ on the standard deviation $\sigma$.


\subsection{Contributions and organization}

We first formally define the input model in Section~\ref{Sec:ModelSampling}. We introduce the concept of ``\qrv'' (Definition~\ref{Def:QVar}) to describe a random variable that corresponds to the output of a quantum computation. We measure the complexity of an algorithm by counting the number of \emph{quantum experiments} (Definition~\ref{Def:qExp}) it performs with respect to a \qrv.

We construct a quantum algorithm for estimating the quantiles of a \qrv\ in Section~\ref{Sec:Quantile}, and we use it in Section~\ref{Sec:SubGaussian} to design the following quantum sub-Gaussian estimator.

\begin{rtheorem}[Theorem~\ref{Thm:SubGaussian} {\normalfont (Restated)}]
  There exists a quantum algorithm with the following properties.
  Let~$X$ be a \qrv\ with mean $\mu$ and variance $\sigma^2$, and set as input a time parameter~$\nb$ and a real $\delta \in (0,1)$ such that $\nb \geq \log(1/\delta)$. Then, the algorithm outputs a mean estimate $\mut$ such that,
    $\pr*{\abs{\mut - \mu} > \frac{\sigma \log(1/\delta)}{\nb}} \leq \delta,$
  and it performs $\bo{\nb \log^{3/2}(\nb) \log\log(\nb)}$ quantum experiments.
\end{rtheorem}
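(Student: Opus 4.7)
The plan is to combine three ingredients: (i) the quantile subroutine of Section~\ref{Sec:Quantile} to truncate $X$ to a narrow interval, (ii) a variance-adaptive quantum estimator applied to the truncated variable, and (iii) a median-of-means boosting step that converts constant-probability guarantees into a $1-\delta$ guarantee.

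\emph{Truncation.} Fix $\alpha = \Theta((\log(1/\delta)/\nb)^2)$ and run the quantile algorithm to obtain, with failure probability at most $\delta/3$ using $\wbo{\nb}$ quantum experiments, values $a, b$ that approximate the $\alpha$- and $(1-\alpha)$-quantiles of $X$. Chebyshev's inequality gives $b - a = O(\sigma/\sqrt{\alpha}) = O(\sigma \nb / \log(1/\delta))$. Setting $X' \coloneqq \min(b, \max(a, X))$, one has $\Var(X') \leq \sigma^2$ and, by Cauchy--Schwarz applied to the truncated tail,
\[
\abs{\Ex[X'] - \mu} \leq \Ex\bigl[\abs{X-\mu} \ind{\abs{X-\mu} \geq (b-a)/2}\bigr] \leq \sigma \sqrt{2\alpha} = O\bigl(\sigma \log(1/\delta)/\nb\bigr),
\]
already matching the target deviation.

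\emph{Estimating $\Ex[X']$ and amplifying.} Apply to $X'$ a dyadic level-set estimator in the style of~\cite{Mon15j,HM19c}: decompose $X' - a$ as a telescoping sum $\sum_{i=0}^{O(\log \nb)} 2^i \ind{X' - a \geq 2^i}$, invoke amplitude estimation on each indicator with the budget allocated so that per-level errors balance, and recombine. Using the inequality $\sum_i 2^{2i} \Pr[X'-a \geq 2^i] \lesssim \Var(X')$, this produces an estimate of $\Ex[X']$ with error $O(\sigma/\nb)$ up to polylogarithmic factors and with constant failure probability, at cost $\wbo{\nb}$. Repeat the procedure independently $k = \Theta(\log(1/\delta))$ times with budget $\nb/k$ each, and return the median $\mut$. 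A Chernoff argument gives $\abs{\mut - \Ex[X']} = O(\sigma \log(1/\delta)/\nb)$ with failure probability at most $\delta/3$; combining the three stages via the triangle inequality and a union bound yields the theorem.

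The main technical obstacle is the second step: without a priori knowledge of $\sigma$, the level-set estimator must nonetheless achieve the variance-adaptive rate, and its second-order error $(b-a)/\nb^2$ has to be controlled by the specific choice of $\alpha$. A secondary challenge is routine bookkeeping: the three stages must share the $\wbo{\nb}$ budget and the $\delta$ failure probability carefully, so that the $O(\log\nb)$ dyadic levels, the within-level amplitude-estimation boosting, and the logarithmic overhead of the quantile routine combine to exactly the stated $\bo{\nb \log^{3/2}\nb \log\log \nb}$ bound.
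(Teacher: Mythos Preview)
Your overall architecture---quantile-based truncation, a dyadic level-set quantum estimator on the truncated variable, and median boosting---is close to the paper's strategy. However, the second step contains a genuine gap. The inequality you invoke, $\sum_i 2^{2i}\Pr[X'-a \geq 2^i] \lesssim \Var(X')$, is false: the left-hand side is (up to constants) the \emph{raw second moment} $\Ex[(X'-a)^2]$, not the variance. Since $a$ is an extreme $\alpha$-quantile, Chebyshev only guarantees $|\mu - a| \leq \sigma/\sqrt{\alpha} = \Theta(\sigma \nb/\log(1/\delta))$, so $\Ex[(X'-a)^2]$ can be as large as $\Theta\bigl((\sigma \nb/\log(1/\delta))^2\bigr)$. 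Feeding this into the level-set analysis yields per-run error of order $\sigma$, not $\sigma/\nb$, and the final bound collapses. Centering at $(a+b)/2$ does not rescue the argument either: for skewed $X$ the midpoint of two extreme quantiles can still sit $\Theta(\sigma \nb/\log(1/\delta))$ away from $\mu$ (take $X=0$ w.p.\ $1-\alpha$ and $X=\sigma/\sqrt{\alpha}$ w.p.\ $\alpha$).

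The missing idea, which the paper supplies, is to first center at a \emph{sample median} $\eta$: with $O(\log(1/\delta))$ classical samples one gets $|\eta - \mu| \leq 2\sigma$ with probability $1-\delta/2$, whence $\Ex[(X-\eta)^2] \leq 9\sigma^2$. The paper then splits $X-\eta$ into its nonnegative and nonpositive parts $Y^{\pm}$, runs the quantile algorithm separately on each to fix the dyadic scales $a_\ell = 2^\ell \td{Q}/\nb$, and applies the interval Bernoulli estimator level by level. The key point is that the per-level errors now aggregate, via Cauchy--Schwarz, against $\Ex[Y^2] \leq \Ex[(X-\eta)^2] = O(\sigma^2)$ rather than against an uncentered second moment---this is exactly what makes the $\sigma\log(1/\delta)/\nb$ rate go through. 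Once you insert the median-centering step, the remainder of your outline (truncation bias via Cauchy--Schwarz, dyadic levels, and median-of-$\Theta(\log(1/\delta))$ boosting in place of the paper's per-call boosting) is sound and essentially matches the paper's proof.
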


Then we turn our attention to the $(\eps,\delta)$-approximation problem in Section~\ref{Sec:epsdelta}. In case we have an upper bound $\ch$ on the coefficient of variation $\abs{\sigma/\mu}$, we directly use our sub-Gaussian estimator to obtain an algorithm that performs $\wbo*{\frac{\ch}{\eps}\log(1/\delta)}$ quantum experiments (Corollary~\ref{Cor:Chebyshev}). Next, we consider the more subtle parameter-free setting where there is no prior information about the input random variable, except that it is distributed in $[0,1]$. In this case, the number of experiments is chosen \emph{adaptively}, and the bound we get is stated in expectation.

\begin{rtheorem}[Theorem~\ref{Thm:SeqEstim} {\normalfont (Restated)}]
  There exists a quantum algorithm with the following properties.
  Let~$X$ be a \qrv\ distributed in $[0,1]$ with mean $\mu$ and variance $\sigma^2$, and set as input two reals $\eps, \delta \in (0,1)$. Then, the algorithm outputs a mean estimate~$\mut$ such that $\pr*{\abs{\mut - \mu} > \eps \mu} \leq \delta$, and it performs
    $\wbo[\big]{\pt[\big]{\frac{\sigma}{\eps \mu} + \frac{1}{\sqrt{\eps \mu}}} \log(1/\delta)}$
  quantum experiments in expectation.
\end{rtheorem}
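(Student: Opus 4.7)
The plan is an adaptive three-phase procedure in the spirit of Dagum-Karp-Luby-Ross, combining the two main quantum subroutines already established in the paper: the sequential amplitude estimator for $[0,1]$-valued q-random variables (Proposition~\ref{Prop:SeqZeroOne}) and the sub-Gaussian estimator (Theorem~\ref{Thm:SubGaussian}). The key observation is that the target cost $(\sigma/(\eps\mu) + 1/\sqrt{\eps\mu})\log(1/\delta)$ is exactly what Theorem~\ref{Thm:SubGaussian} yields when fed any upper bound $\td{\sigma}$ on the standard deviation satisfying $\td{\sigma} \leq O(\sigma + \sqrt{\eps\mu})$. The first two phases therefore aim to produce such an upper bound cheaply.

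\emph{Phase~1} (rough mean): invoke Proposition~\ref{Prop:SeqZeroOne} on $X$ with relative error $\sqrt{\eps}/2$, obtaining $\mut_0$ with $\abs{\mut_0 - \mu} \leq (\sqrt{\eps}/2)\mu \leq \sqrt{\eps\mu}/2$ (the last inequality uses $\mu \leq 1$), at expected cost $\wbo{\log(1/\delta)/\sqrt{\eps\mu}}$. \emph{Phase~2} (variance upper bound): form the auxiliary $[0,1]$-valued q-random variable $Y = (X - \mut_0)^2$, whose mean is $V = \sigma^2 + (\mu-\mut_0)^2 \in [\sigma^2, \sigma^2 + \eps\mu]$, and apply amplitude estimation (Proposition~\ref{Prop:ZeroOne}) to $Y$ with $\nb_2 = C\log(1/\delta)/\sqrt{\eps\mut_0}$ queries. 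The two terms of the deviation bound evaluate to $\sqrt{V\eps\mut_0}/C + \eps\mut_0/C^2 \leq O((V+\eps\mu)/C)$, so $\abs{\td{V} - V} \leq O((V+\eps\mu)/C)$. Setting $\td{\sigma}^2 := 2\td{V} + c\,\eps\mut_0$ for suitable constants $C,c$ then guarantees $\sigma^2 \leq \td{\sigma}^2 \leq O(\sigma^2 + \eps\mu)$, hence $\td{\sigma} \leq O(\sigma + \sqrt{\eps\mu})$, at cost $\wbo{\log(1/\delta)/\sqrt{\eps\mu}}$. \emph{Phase~3} (refined estimate): apply Theorem~\ref{Thm:SubGaussian} to $X$ with time parameter $\nb_3 = C'\td{\sigma}\log(1/\delta)/(\eps\mut_0)$; the sub-Gaussian deviation bound yields $\abs{\mut - \mu} \leq \sigma\log(1/\delta)/\nb_3 = \sigma\eps\mut_0/(C'\td{\sigma}) \leq \eps\mu$ for $C'$ large enough, at cost $\wbo{\td{\sigma}\log(1/\delta)/(\eps\mu)} = \wbo{(\sigma/(\eps\mu) + 1/\sqrt{\eps\mu})\log(1/\delta)}$, dominating the overall budget.

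The main obstacle is controlling the \emph{expected} query count in the event that an early phase fails. In particular, if Phase~2 returns a corrupt $\td{\sigma}$, even the trivial worst-case bound $\td{\sigma} \leq O(1)$ still permits Phase~3 to use up to $O(\log(1/\delta)/(\eps\mu))$ queries, which exceeds the target by a factor of $1/\sqrt{\eps\mu}$. I would address this by choosing Phase~2's internal failure probability as $\delta_2 = \Theta(\min(\delta, \sqrt{\eps\mu}))$, so that the bad event's contribution to the expected cost is $\delta_2 \cdot O(\log(1/\delta)/(\eps\mu)) = \wbo{\log(1/\delta)/\sqrt{\eps\mu}}$; the corresponding $\log(1/\delta_2)$ inflation of the Phase~2 cost is polylogarithmic and absorbed by the $\wbo{\cdot}$ notation. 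An analogous calibration handles Phase~1 failures, and the three failure probabilities are scaled so that their sum is at most $\delta$ for the final union bound on correctness.
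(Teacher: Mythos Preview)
Your three-phase outline is the right template, but two steps do not go through as written. First, Proposition~\ref{Prop:SeqZeroOne} delivers only a \emph{fixed} universal relative error~$c$, not a tunable one; you cannot ``invoke it with relative error $\sqrt{\eps}/2$''. This matters because your Phase~2 variable $Y=(X-\mut_0)^2$ has mean $\sigma^2+(\mu-\mut_0)^2$, and with only a constant-factor~$\mut_0$ that second term is $\Theta(\mu^2)$, forcing $\td\sigma=\Theta(\sigma+\mu)$ and a Phase~3 cost of order $1/\eps$ rather than $1/\sqrt{\eps\mu}$. Second, and more fundamentally, your failure-event calibration $\delta_2=\Theta(\min(\delta,\sqrt{\eps\mu}))$ depends on the unknown~$\mu$. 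Replacing~$\mu$ by~$\mut_0$ is valid only conditional on Phase~1 succeeding; in the Phase~1 failure event, $\mut_0$ has no deterministic lower bound, so the Phase~3 budget $\nb_3\propto 1/(\eps\mut_0)$ is unbounded and no finite choice of~$\delta_1$ controls its contribution to the expectation. The claim that ``an analogous calibration handles Phase~1 failures'' therefore has no content.

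The paper resolves both issues simultaneously by a different choice of auxiliary variable and a different accounting method. It estimates the variance via $Y=(X-X')^2/2$ with $X'$ an independent copy of~$X$, so that $\ex{Y}=\sigma^2$ exactly and Phase~1 need only supply a constant-factor estimate~$\mut_X$---precisely what Proposition~\ref{Prop:SeqZeroOne} gives. The expected-cost bound is then obtained \emph{without} any high-probability/worst-case split: the inner loop runs once with constant failure probability, the Phase~3 parameter is set to $\nb=c_2\max\bigl(\sqrt{\mut_Y}/(\eps\mut_X),\,1/\sqrt{\eps\mut_X}\bigr)$, and the moment properties $\ex{1/\mut_X}\leq O(1/\mu)$, $\ex{1/\sqrt{\mut_X}}\leq O(1/\sqrt{\mu})$, $\ex{\sqrt{\mut_Y}}\leq O(\sigma)$ from Proposition~\ref{Prop:SeqZeroOne} (together with the independence of $\mut_X$ and~$\mut_Y$) bound $\ex{\nb}$ directly, unconditionally on success. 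Correctness to confidence~$1-\delta$ is recovered at the end by repeating the whole inner loop $O(\log(1/\delta))$ times and taking the median.
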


Finally, we prove several lower bounds in Section~\ref{Sec:LowerBoundMean} that match the complexity of the above estimators. We also consider the weaker input model where one is given copies of a quantum state encoding the distribution of $X$. We prove that no quantum speedup is achievable in this setting (Theorem~\ref{Thm:StateBLower}).


\subsection{Proof overview}
\label{Sec:MeanOverview}

\paragraph*{Sub-Gaussian estimator.} Our approach (Theorem~\ref{Thm:SubGaussian}) combines several ideas used in previous classical and quantum mean estimators. In this section, we simplify the exposition by assuming that the random variable $X$ is non-negative and by replacing the variance~$\sigma^2$ with the second moment $\ex{X^2}$. We also take the failure probability $\delta$ to be a small constant. Our starting point is a variant of the \emph{truncated mean estimators}~\cite{Bic65j,BCL13j,LM19j}. Truncation is a process that consists of replacing the samples larger than some threshold value with a smaller number. This has the effect of reducing the tail of the distribution, but also of changing its expectation. Here we study the effect of replacing the values larger than some threshold $b$ with $0$, which corresponds to the new random variable $Y = X \ind{X \leq b}$. We consider the following classical sub-Gaussian estimator that we were not able to find in the literature: set $b = \sqrt{\nb \ex{X^2}}$ and compute the empirical mean of~$\nb$ samples from $Y$. By a simple calculation, one can prove that the expectation of the removed part is at most $\ex{X-Y} \leq \ex{X^2}/b = \sqrt{\ex{X^2}/\nb}$. Moreover, using Bernstein's inequality and the boundedness of $Y$, the error between the output estimate and $\ex{Y}$ is on the order of $\sqrt{\ex{X^2}/\nb}$. These two facts together imply that the overall error for estimating~$\ex{X}$ is indeed of a sub-Gaussian type. This approach can be carried out in the quantum model by performing the truncation in superposition. This is similar to what is done in previous quantum mean estimators~\cite{Hei02j,Mon15j,HM19c}. In order to obtain a quantum speedup, one must balance the truncation level differently by taking $b = \nb \sqrt{\ex{X^2}}$. Then, by a clever use of amplitude estimation discovered by Heinrich~\cite{Hei02j} (see also~\cite[Proposition A.1]{HM18p}), the expectation of~$Y$ can be estimated with an error on the order of $\sqrt{\ex{Y^2}}/\nb + \max\pt{Y}/\nb^2 \leq 2\sqrt{\ex{X^2}}/\nb$. The main drawback of this estimator is that it requires the knowledge of~$\ex{X^2}$ to perform the truncation. In previous work~\cite{Hei02j,Mon15j,HM19c}, the authors made further assumptions on the variance to be able to approximate $b$. Here, we overcome this issue by choosing the truncation level $b$ differently. Borrowing ideas from classical estimators~\cite{LM19j}, we define $b$ as the quantile value that satisfies $\pr{X \geq b} = 1/\nb^2$. This quantile is always smaller than the previous threshold value $\nb \sqrt{\ex{X^2}}$. Moreover, it can be shown that the removed part $\ex{X-Y}$ is still on the order of $\sqrt{\ex{X^2}}/\nb$. We give a new quantum algorithm for approximating this quantile with roughly $\nb$ quantum experiments (Theorem~\ref{Thm:Quantile}), whereas it would require $\nb^2$ random experiments classically. Our quantile estimation algorithm builds upon the quantum minimum finding algorithm of D{\"{u}}rr and H{\o}yer~\cite{DH96p,vAGGdW20ja} and the $k$th-smallest element finding algorithm of Nayak and Wu~\cite{NW99c}. Importantly, it does not require any knowledge about $\ex{X^2}$.

\paragraph*{$(\eps,\delta)$-Approximation without side information.} We follow an approach similar to that of a classical estimator described in~\cite{DKLR00j}. Our algorithm (Theorem~\ref{Thm:SeqEstim}) uses the quantum sub-Gaussian estimator and the quantum \emph{sequential Bernoulli estimator} described in Proposition~\ref{Prop:SeqZeroOne}. The latter estimator can estimate the mean~$\mu$ of a random variable $X$ distributed in $[0,1]$ with constant relative error by performing $\bo{1/\sqrt{\mu}}$ quantum experiments in expectation. The first step of the $(\eps,\delta)$-approximation algorithm is to compute a rough estimate $\muh$ of $\mu$ with the sequential Bernoulli estimator. Then, the variance $\sigma^2$ of $X$ is estimated by using again the sequential Bernoulli estimator on the random variable $(X-X')/2$ (where $X'$ is an independent copy of $X$). The latter estimation is stopped if it uses more than $\bo{1/\sqrt{\eps \muh}}$ quantum experiments. We show that if $\sigma^2 \geq \om{\eps \mu}$ then the computation is not stopped and the resulting estimate $\td{\sigma}^2$ is close to $\sigma^2$ with high probability. Otherwise, it is stopped with high probability and we set $\td{\sigma} = 0$. Finally, the quantum sub-Gaussian estimator is used with the parameter $\nb \approx \max\pt[\big]{\frac{\td{\sigma}}{\eps \muh}, \frac{1}{\sqrt{\eps \muh}}}$ to obtain a refined estimate $\mut$ of $\mu$. The choice of the first (resp. second) term in the maximum value implies that $\abs{\mut - \mu} \leq \eps \mu$ with high probability when the variance $\sigma^2$ is larger (resp. smaller) than~$\eps \mu$. In order to upper bound the expected number of experiments performed by this estimator, we show in Proposition~\ref{Prop:SeqZeroOne} that the estimates~$\muh$ and~$\td{\sigma}$ obtained with the sequential Bernoulli estimator satisfy the expectation bounds $\ex{1/\muh} \leq 1/\mu$, $\ex{\td{\sigma}} \leq \sigma$ and $\ex{1/\sqrt{\muh}} \leq 1/\sqrt{\mu}$.

\paragraph*{Lower bounds.} We sketch the proof of optimality of the quantum sub-Gaussian estimator (Theorem~\ref{Thm:SubGLower}). The lower bound is proved in the stronger quantum query model, which allows us to extend it to all the other models mentioned in Section~\ref{Sec:ModelSampling}. Our approach is inspired by the truncation level chosen in the algorithm. Given $\sigma$ and $\nb$, we consider the two distributions $p_0$ and $p_1$ that output respectively $\frac{\nb\sigma}{\sqrt{1-1/\nb^2}}$ and $\frac{- \nb\sigma}{\sqrt{1-1/\nb^2}}$ with probability~$1/\nb^2$, and $0$ otherwise. The two distributions have variance~$\sigma^2$ and the distance between their means is larger than $\frac{2\sigma}{\nb}$. Thus, any estimator that satisfies the bound $\pr*{\abs{\mut - \mu} > \frac{\sigma}{\nb}} \leq \frac{1}{3}$ can distinguish between $p_0$ and $p_1$ with constant success probability. However, we show by a reduction to Quantum Search that it requires at least $\om{\nb}$ quantum experiments to distinguish between two distributions that differ with probability at most~$1/\nb^2$.

\section{Model of input}
\label{Sec:ModelSampling}
The input to the mean estimation problem is represented by a real-valued random variable~$X$ defined on some probability space. A classical estimator accesses this input by obtaining~$\nb$ \emph{i.i.d} samples of~$X$. In this section, we describe the access model for quantum estimators and we compare it to previous models suggested in the literature. We only consider finite probability spaces for finite encoding reasons. First, we recall the definition of a random variable, and we define a classical model of access called a \emph{random experiment}.

\begin{definition}[\sc Random variable]
  \label{Def:RandVar}
  A finite \emph{random variable} is a function $X : \Omega \ra E$ for some probability space $(\Omega,p)$, where $\Omega$ is a finite sample set, $p : \Omega \ra [0,1]$ is a probability mass function and $E \subset \R$ is the support of $X$. As is customary, we will often omit to mention $(\Omega,p)$ when referring to the random variable $X$.
\end{definition}

\begin{definition}[\sc Random experiment]
  \label{Def:rExp}
  Given a random variable $X$ on a probability space $(\Omega,p)$, we define a \emph{random experiment} as the process of drawing a sample $\omega \in \Omega$ according to $p$ and observing the value of $X(\omega)$.
\end{definition}

We now introduce the concept of ``\qrv'' to represent a quantum process that outputs a real number.

\begin{definition}[\sc \qrv]
  \label{Def:QVar}
  A \emph{q-variable} is a triple $(\Hil,U,M)$ where $\Hil$ is a finite-dimensional Hilbert space, $U$ is a unitary transformation on $\Hil$, and $M = \set{M_x}_{x \in E}$ is a projective measurement on $\Hil$ indexed by a finite set $E \subset \R$. Given a random variable~$X$ on a probability space $(\Omega,p)$, we say that a q-variable $(\Hil,U,M)$ \emph{generates} $X$ when,
  \begin{enumresult}
    \item $\Hil$ is a finite-dimensional Hilbert space with some basis $\set{\ket{\omega}}_{\omega \in \Omega}$ indexed by $\Omega$.
    \item $U$ is a unitary transformation on $\Hil$ such that $U \qub = \sum_{\omega \in \Omega} \sqrt{p(\omega)} \ket{\omega}$.
    \item $M = \set{M_x}_{x}$ is the projective measurement on $\Hil$ defined by $M_x = \sum_{\omega : X(\omega) = x} \proj{\omega}$.
  \end{enumresult}
  A random variable $X$ is a \emph{\qrv} if it is generated by some q-variable $(\Hil,U,M)$.
\end{definition}

We stress that the sample space $\Omega$ may not be known explicitly, and we do not assume that it is easy to perform a measurement in the $\set{\ket{\omega}}_{\omega \in \Omega}$ basis for instance. Often, we are given a unitary~$U$ such that $U \qub = \sum_{x \in E} \sqrt{p(x)} \ket{\psi_x} \ket{x}$ for some unknown garbage unit state~$\ket{\psi_x}$, together with the measurement $M = \set{I \otimes \proj{x}}_{x \in E}$. In this case, we can consider the \qrv~$X$ defined on the probability space $(\Omega,p)$ where $\Omega = \set{\ket{\psi_x} \ket{x}}_{x \in E}$ and $X(\ket{\psi_x} \ket{x}) = x$.

We further assume that there exist two quantum oracles, defined below, for obtaining information on the function $X : \Omega \ra E$. These two oracles can be efficiently implemented if we have access to a quantum evaluation oracle $\ket{\omega}\qub \mapsto \ket{\omega}\ket{X(\omega)}$ for instance. The rotation oracle (Assumption~\ref{Assp:Rot}) has been extensively used in previous quantum mean estimators~\cite{Ter99d,BDGT11p,Mon15j,HM19c}. The comparison oracle (Assumption~\ref{Assp:Comp}) is needed in our work to implement the quantile estimation algorithm.


\begin{asspt}{A}[\sc Comparison oracle]
  \label{Assp:Comp}
  Given a \qrv\ $X$ on a probability space $(\Omega,p)$, and any two values $a,b \in \R \cup \set{-\infty,+\infty}$ such that $a < b$, there is a unitary operator $C_{a,b}$ acting on $\Hil \otimes \C^2$ such that for all $\omega \in \Omega$,
  \begin{align*}
    C_{a,b} (\ket{\omega}\ket{0}) =
      \begin{cases}
        \ket{\omega}\ket{1} & \text{when $a < X(\omega) \leq b$,} \\
        \ket{\omega}\ket{0} & \text{otherwise.}
      \end{cases}
  \end{align*}
\end{asspt}

\begin{asspt}{B}[\sc Rotation oracle]
  \label{Assp:Rot}
  Given a \qrv\ $X$ on a probability space $(\Omega,p)$, and any two values $a,b \in \R \cup \set{-\infty,+\infty}$ such that $a < b$, there is a unitary operator $R_{a,b}$ acting on $\Hil \otimes \C^2$ such that for all $\omega \in \Omega$,
    \begin{align*}
      R_{a,b} (\ket{\omega}\ket{0}) =
        \begin{cases}
          \ket{\omega} \pt[\bigg]{\sqrt{1-\abs*{\frac{X(\omega)}{b}}}\ket{0} + \sqrt{\abs*{\frac{X(\omega)}{b}}} \ket{1}} & \text{when $a < X(\omega) \leq b$,} \\
          \ket{\omega}\ket{0} & \text{otherwise.}
        \end{cases}
    \end{align*}
\end{asspt}

We now define the measure of complexity used to count the number of accesses to a \qrv, which are referred to as \emph{quantum experiments}.

\begin{definition}[\sc Quantum experiment]
  \label{Def:qExp}
  Let $X$ be a \qrv\ that satisfies Assumptions~\ref{Assp:Comp} and~\ref{Assp:Rot}. Let $(\Hil,U,M)$ be a q-variable that generates $X$. We define a \emph{quantum experiment} as the process of applying any of the unitaries $U$, $C_{a,b}$, $R_{a,b}$ (for any values of~$a < b$), their inverses or their controlled versions, or performing a measurement according to~$M$.
\end{definition}

Note that a random experiment (Definition~\ref{Def:rExp}) can be simulated with two quantum experiments by computing the state $U \qub$ and measuring it according to $M$. We briefly mention two other possible input models. First, some authors~\cite{Gro98c,NW99c,Hei02j,BHH11j,CFMdW10c,BDGT11p,LW19j} consider the stronger query model where $p$ is the uniform distribution and a quantum evaluation oracle is provided for the function $\omega \mapsto X(\omega)$. A second model tackles the problem of \emph{learning from quantum states}~\cite{BJ99j,AdW18j,ABC20c}, where the input consists of several copies of $\sum_{x \in E} \sqrt{\pr{X = x}} \ket{x}$ (we do not have access to a unitary preparing that state). We show in Theorem~\ref{Thm:StateBLower} that no quantum speedup is achievable for our problem in the latter setting.

\section{Quantile estimation}
\label{Sec:Quantile}
In this section, we present a quantum algorithm for estimating the quantiles of a finite random variable $X$. This is a key ingredient for the sub-Gaussian estimator of Section~\ref{Sec:SubGaussian}. For the convenience of reading, we define a quantile in the following non-standard way (the cumulative distribution function is replaced with its complement).

\begin{definition}[\sc Quantile]
  Given a discrete random variable $X$ and a real $p \in [0,1]$, the \emph{quantile} of order $p$ is the number
    $Q(p) = \sup\set{x \in \R : \pr{X \geq x} \geq p}$.
\end{definition}

Our result is inspired by the minimum finding algorithm of D{\"{u}}rr and H{\o}yer~\cite{DH96p} and its generalization in~\cite{vAGGdW20ja}. The problem of estimating the quantiles of a set of numbers under the \emph{uniform} distribution was studied before by Nayak and Wu~\cite{NW99c,Nay99d}. We differ from that work by allowing arbitrary distributions, and by not using the amplitude estimation algorithm. On the other hand, we restrict ourselves to finding a constant factor estimate, whereas \cite{NW99c,Nay99d} can achieve any wanted accuracy.

The idea behind our algorithm is rather simple: if we compute a sequence of values $-\infty = y_0 \leq y_1 \leq y_2 \leq y_3 \leq \dots$ where each~$y_{j+1}$ is sampled from the distribution of~$X$ conditioned on $y_{j+1} \geq y_j$, then when $j \simeq \log(1/p)$ the value of $y_j$ should be close to the quantile $Q(p)$. The complexity of sampling each $y_j$ is on the order of $1/\pr{X \geq y_j}$ classically, but it can be done quadratically faster in the quantum setting. We analyze a slightly different algorithm, where the sequence of samples is strictly increasing and instead of stopping after roughly $\log(1/p)$ iterations we count the number of experiments performed by the algorithm and stop when it reaches a value close to $1/\sqrt{p}$. This requires showing that the times $T_j$ spent on sampling $y_j$ is neither too large nor too small with high probability, which is proved in the next lemma.

\begin{lemma}
  \label{Lem:IntervalSampl}
  There is a quantum algorithm such that, given a \qrv\ $X$ and a value $x \in \R \cup \set{-\infty,+\infty}$, it outputs a sample $y$ from the probability distribution of $X$ conditioned on $y > x$. If we let $T$ denote the number of quantum experiments performed by this algorithm, then there exist two universal constants $c_0 < c_1$ such that $\ex{T} \leq c_1/\sqrt{\pr{X > x}}$ and $\pr{T < c_0/\sqrt{\pr{X > x}}} \leq 1/10$.
\end{lemma}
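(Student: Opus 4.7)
The plan is to run a BBHT-style exponential quantum search using the comparison oracle $C_{x,+\infty}$ (Assumption~\ref{Assp:Comp}) as the marking subroutine. Let $a := \pr{X > x}$ and decompose $U\qub = \sqrt{a}\ket{G} + \sqrt{1-a}\ket{B}$, where $\ket{G} = a^{-1/2}\sum_{\omega : X(\omega) > x}\sqrt{p(\omega)}\ket{\omega}$ is the normalized projection onto the marked subspace. Combining $C_{x,+\infty}$ (to reflect around $\ket{B}$) with $U$ (to reflect around $U\qub$) yields the Grover iterate $G_x$, whose $j$-fold application to $U\qub$ produces amplitude $\sin((2j+1)\theta)$ on $\ket{G}$, where $\sin\theta = \sqrt{a}$. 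The algorithm runs rounds $k = 0,1,2,\dots$: in round $k$, pick $j_k$ uniformly in $\{0,\dots,M_k-1\}$ with $M_k = \lceil c^k \rceil$ for some fixed $c \in (1,4/3)$, prepare $G_x^{j_k}U\qub$, measure with $M$ to obtain $y \in E$, and return $y$ if $y > x$.

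The returned sample has the correct distribution: Grover iteration acts only within the two-dimensional span of $\ket{G}$ and $\ket{B}$, so the relative amplitudes of basis vectors inside $\ket{G}$ are preserved, and measuring $M$ conditional on the outcome exceeding $x$ yields $\omega$ with probability $p(\omega)/a$ on the marked set, which is exactly the law of $X$ conditioned on $X > x$.

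For the upper bound $\ex{T} \leq c_1/\sqrt{a}$, I invoke the standard BBHT analysis: averaging $\sin^2((2j+1)\theta)$ over $j \in \{0,\dots,M-1\}$ equals $\tfrac{1}{2} - \tfrac{\sin(4M\theta)}{8M\sin(2\theta)} \geq \tfrac{1}{4}$ once $M \gtrsim 1/\sqrt{a}$. Thus, past a threshold round $k^* = \Theta(\log(1/\sqrt{a}))$, every round succeeds with probability at least $1/4$, and the expected cost is bounded by $\sum_{k \leq k^*} M_k + \sum_{r\geq 0}(3/4)^r M_{k^*+r} = \bo{c^{k^*}} = \bo{1/\sqrt{a}}$; convergence of the tail sum is what forces the choice $c < 4/3$.

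The main obstacle is the lower-tail bound $\pr{T < c_0/\sqrt{a}} \leq 1/10$, which rules out premature termination. For rounds $k$ below some $K_0$, the small-angle estimate $\sin^2((2j+1)\theta) \leq (2j+1)^2 a \leq 4M_k^2 a$ bounds the per-round success probability by $4c^{2k}a$. Since the cumulative number of experiments through round $k$ is $\Theta(c^k)$, the event $\{T < c_0/\sqrt{a}\}$ forces termination by some round $K_0$ with $c^{K_0} = \bo{c_0/\sqrt{a}}$; summing the geometric series then yields $\pr{T < c_0/\sqrt{a}} \leq \sum_{k \leq K_0} 4c^{2k}a = \bo{c_0^2}$, which is at most $1/10$ for a sufficiently small universal constant $c_0$.
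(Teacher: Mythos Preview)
Your proof is correct and follows essentially the same approach as the paper: both mark the good subspace via the comparison oracle $C_{x,+\infty}$ and run a BBHT-style exponential amplitude amplification until a sample with $y>x$ is observed. The only difference is packaging: the paper factors the analysis through a separate sequential amplitude amplification theorem (Theorem~\ref{Thm:SeqAA}), obtaining $\ex{T}\le c_1/\sqrt{a}$ from $\ex{T^2}=\bo{1/a}$ and the lower-tail bound from $\ex{1/T}=\bo{\sqrt{a}}$ via Markov, whereas you inline the BBHT calculation and get the lower-tail bound by a direct union bound over the early rounds.
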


\begin{proof}
  Let $(\Hil,U,M)$ be a q-variable generating $X$. We use the comparison oracle $C_{x,+\infty}$ from Assumption~\ref{Assp:Comp} to construct the unitary $V = C_{x,+\infty} (U \otimes I)$ acting on $\Hil \otimes \C^2$. By definition of $C_{x,+\infty}$ and $U$ (Section~\ref{Sec:ModelSampling}), we have that
  $
    V \qub
      = \sum_{\omega \in \Omega : X(\omega) \leq x} \sqrt{p(\omega)} \ket{\omega} \ket{0} + \sum_{\omega \in \Omega : X(\omega) > x} \sqrt{p(\omega)} \ket{\omega} \ket{1}
      = \sqrt{1 - \pr{X > x}} \ket{\phi_0} \ket{0} + \sqrt{\pr{X > x}} \ket{\phi_1} \ket{1}
  $
  for some unit states $\ket{\phi_0}, \ket{\phi_1}$ where $\ket{\phi_1} = \frac{1}{\sqrt{\pr{X > x}}} \sum_{\omega : X(\omega) > x} \sqrt{p(\omega)} \ket{\omega}$. The algorithm for sampling~$y$ conditioned on $y > x$ consists of two steps. First, we use the sequential amplitude amplification algorithm $\saamp(V,I \otimes \proj{1})$ from Theorem~\ref{Thm:SeqAA} on~$V$ to obtain the state $\ket{\phi_1}$. Next, we measure $\ket{\phi_1}$ according to $M$. The claimed properties follow directly from Theorem~\ref{Thm:SeqAA}.
\end{proof}

We use the next formula for the probability that a value $x$ occurs in the sequence $(y_j)_j$ defined before. This lemma is adapted from \cite[Lemma 1]{DH96p}.

\begin{lemma}[Lemma 47 in~\cite{vAGGdW20ja}]
  \label{Lem:infiniteSampling}
  Let $X$ be a discrete random variable. Consider the increasing sequence of random variables $Y_0,Y_1,Y_2,\dots$ where $Y_0$ is a fixed value and~$Y_{j+1}$ for $j \geq 0$ is a sample drawn from $X$ conditioned on $Y_{j+1} > Y_j$. Then, for any $x,y \in \R$,
     \begin{align*}
       \pr{x \in \set{Y_1,Y_2,\dots} \given Y_0 = y} =
         \begin{cases}
           \pr{X = x \given X \geq x} & \text{when $x > y$,} \\
           0 & \text{otherwise.}
         \end{cases}
     \end{align*}
\end{lemma}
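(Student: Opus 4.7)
I would split on the two cases in the conclusion. If $x \leq y$, then conditional on $Y_0 = y$ the strict monotonicity $Y_{j+1} > Y_j$ yields $Y_j > y \geq x$ for every $j \geq 1$, so $x \notin \set{Y_1, Y_2, \dots}$ and the probability is $0$.

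For the main case $x > y$, first dispose of the degenerate subcase $\pr{X \geq x} = 0$, in which both sides are $0$ (the sequence can never reach $x$, and the right-hand side is read as $0$). Otherwise, introduce the first-passage index $J = \min\set{j \geq 1 : Y_j \geq x}$. Because $X$ is discrete with finite support and $(Y_j)$ is strictly increasing, $J$ is almost surely finite: from any state $Y_j < x$ we have $\pr{Y_{j+1} \geq x \given Y_j} = \pr{X \geq x}/\pr{X > Y_j} \geq \pr{X \geq x} > 0$, and the sequence can take at most $\abs{\mathrm{supp}(X)}$ distinct values. Strict monotonicity then implies that $x$ appears in $\set{Y_1, Y_2, \dots}$ if and only if $Y_J = x$, since all earlier $Y_j$ are $< x$ and all later $Y_j$ are $> Y_J \geq x$. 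Hence
\[
  \pr*{x \in \set{Y_1, Y_2, \dots} \given Y_0 = y} = \pr*{Y_J = x \given Y_0 = y}.
\]

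The heart of the argument is computing the conditional law of $Y_J$. Condition on $J = j$ and $Y_{j-1} = z$ (with the convention $z = y$ if $j = 1$), which forces $z < x$ by definition of $J$. Then $Y_j$ is by construction a sample from $X$ conditioned on $X > z$; further conditioning on $Y_j \geq x$ and using the inclusion $\set{X \geq x} \subseteq \set{X > z}$ collapses the conditioning to $X \geq x$ alone. Thus
\[
  \pr*{Y_J = x \given J = j,\, Y_{j-1} = z,\, Y_0 = y} = \pr*{X = x \given X \geq x}.
\]
Since the right-hand side depends on neither $j$ nor $z$, averaging over $(j, z)$ against the joint law of $(J, Y_{J-1})$ conditioned on $Y_0 = y$ gives the claimed formula.

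The step requiring the most care — and the closest thing to an obstacle — is the collapse of the conditioning on $\set{X > z} \cap \set{X \geq x}$ down to $\set{X \geq x}$ together with the almost-sure finiteness of $J$; both reduce immediately to the discreteness of $X$ and the strict monotonicity of $(Y_j)$, so no real difficulty is expected.
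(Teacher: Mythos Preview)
Your argument is correct. The paper does not actually supply a proof of this lemma: it is quoted verbatim as Lemma~47 of~\cite{vAGGdW20ja} (itself adapted from Lemma~1 of D\"urr--H{\o}yer~\cite{DH96p}), so there is no in-paper argument to compare against. For context, the classical D\"urr--H{\o}yer proof proceeds by induction on the rank of $x$ among the values $\geq y$, whereas you use a first-passage-time decomposition combined with the Markov property of $(Y_j)$; both are short, and your route has the mild advantage of making the role of discreteness and the a.s.\ finiteness of the hitting time $J$ explicit. The only cosmetic point: the bound ``at most $|\mathrm{supp}(X)|$ distinct values'' is not needed for finiteness of $J$ --- your preceding geometric-domination inequality $\Pr[Y_{j+1}\geq x\mid Y_j] \geq \Pr[X\geq x]>0$ already suffices --- but it does no harm given the paper's standing assumption of finite support.
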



The quantile estimation algorithm is described in Algorithm~\ref{Alg:Quantile} and its analysis is provided in the next theorem.

\algobox{Alg:Quantile}{Quantile estimation algorithm, $\quant(X,p,\delta)$.}{
\begin{enumerate}[leftmargin=*]
  \item Repeat the following steps for $i = 1, 2, \dots, \ceil{6\log(1/\delta)}$.
  \begin{enumerate}
    \item Set $y_0 = -\infty$ and initialize a counter $C = 0$ that is incremented each time a quantum experiment is performed.
    \item Set $j = 1$. Repeat the following process and interrupt it when $C = c'/\sqrt{p}$ (where $c'$ is a constant chosen in the proof of Theorem~\ref{Thm:Quantile}): sample an element $y_{j+1}$ from $X$ conditioned on $y_{j+1} > y_j$ by using the algorithm of Lemma~\ref{Lem:IntervalSampl}, set $j \leftarrow j+1$.
    \item Set $\td{Q}\super{i} = y_j$.
  \end{enumerate}
  \item Output $\td{Q} = \median\pt{\td{Q}\super{1},\dots,\td{Q}\super{\ceil{6\log(1/\delta)}}}$.
\end{enumerate}
}

\begin{theorem}[\sc Quantile estimation]
  \label{Thm:Quantile}
  Let $X$ be a \qrv. Given two reals $p, \delta \in (0,1)$, the approximate quantile $\td{Q}$ produced by the \emph{quantile estimation} algorithm $\quant(X,p,\delta)$ (Algorithm~\ref{Alg:Quantile}) satisfies
    \[Q(p) \leq \td{Q} \leq Q(cp)\]
  with probability at least $1-\delta$, where $c < 1$ is a universal constant. The algorithm performs $\bo[\Big]{\frac{\log(1/\delta)}{\sqrt{p}}}$ quantum experiments.
\end{theorem}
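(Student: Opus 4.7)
The plan is to analyze a single inner iteration of the algorithm, show that the output $\td{Q}\super{i}$ lies in $[Q(p), Q(cp)]$ with some constant probability bounded away from $1/2$, and then invoke standard median amplification over the $M = \ceil{6\log(1/\delta)}$ trials via Chernoff. Throughout, let $T_j$ denote the number of quantum experiments used by Lemma~\ref{Lem:IntervalSampl} to sample $y_{j+1}$ from $X$ conditioned on $X > y_j$, so that $\ex{T_j \given y_j} \leq c_1/\sqrt{\pr{X > y_j}}$ and $\pr{T_j < c_0/\sqrt{\pr{X > y_j}} \given y_j} \leq 1/10$; inside one call, the runtime of sequential amplitude amplification is independent of the value $y_{j+1}$ that is eventually measured. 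The output of a trial is $\td{Q}\super{i} = y_N$, the last completed element of the strictly increasing chain when the counter reaches $c'/\sqrt{p}$.

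For the direction $\td{Q}\super{i} \geq Q(p)$, I would bound the expected time $\tau_A$ for the chain to first reach a state $y_j \geq Q(p)$. Enumerating the support of $X$ as $a_1 < a_2 < \cdots$ with weights $q_k = \pr{X = a_k}$ and tails $F_k = \pr{X \geq a_k}$, and letting $k^*$ be the index of $Q(p)$, Lemma~\ref{Lem:infiniteSampling} gives the visit probabilities $\pr{a_k \in \set{y_1,y_2,\dots}} = q_k/F_k$, from which
\[
  \ex{\tau_A} \;\leq\; c_1 + c_1\sum_{k=1}^{k^*-1} \frac{q_k}{F_k\sqrt{F_{k+1}}}.
\]
The elementary inequality $\frac{q_k}{F_k\sqrt{F_{k+1}}} \leq 2\pt[\big]{\frac{1}{\sqrt{F_{k+1}}} - \frac{1}{\sqrt{F_k}}}$ makes the sum telescope to at most $2/\sqrt{F_{k^*}} = \bo{1/\sqrt{p}}$. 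Markov's inequality with $c'$ a sufficiently large constant then yields $\pr{\tau_A > c'/\sqrt{p}} \leq 1/20$, and on this event $\td{Q}\super{i} \geq Q(p)$ since the chain is strictly increasing.

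The direction $\td{Q}\super{i} \leq Q(cp)$ is the main difficulty. Let $J = \min\set{j : y_j > Q(cp)}$ and $p_{J-1} = \pr{X > y_{J-1}}$; the output can exceed $Q(cp)$ only if sample $J$ is completed within the budget, which forces $T_{J-1} \leq c'/\sqrt{p}$. The obstacle is that the chain may jump in one step from a $y_{J-1}$ lying well below $Q(cp)$ to $y_J > Q(cp)$, in which case $p_{J-1}$ is much larger than $cp$ and the lower tail of Lemma~\ref{Lem:IntervalSampl} only forces $T_{J-1} \gtrsim 1/\sqrt{p_{J-1}}$, not $1/\sqrt{cp}$. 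To rule this out I would prove the tail estimate
\[
  \pr{p_{J-1} > r \cdot cp} \;\leq\; \frac{1}{r} \qquad \text{for every } r \geq 1,
\]
by decomposing on the last pre-crossing state $y_{J-1}$, applying Lemma~\ref{Lem:infiniteSampling} to express the left-hand side as a sum of the form $F_{k^{**}+1}\sum_{k:\,F_{k+1}>rcp}\frac{q_k}{F_kF_{k+1}}$ (with $k^{**}$ the index of $Q(cp)$), and telescoping using $F_{k^{**}+1} < cp$. Combining this tail bound with the lower tail of $T_{J-1}$ and the conditional independence of the runtime and the sampled value, one obtains $T_{J-1} \geq c_0/\sqrt{rcp}$ with probability at least $(9/10)(1-1/r)$. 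Choosing $r$ a large constant and $c$ so small that $c_0/\sqrt{rc} > c'$, this lower bound exceeds $c'/\sqrt{p}$, and the per-trial failure probability in this direction is at most a small constant, say $1/5$.

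Combining the two directions, each trial places $\td{Q}\super{i}$ in $[Q(p), Q(cp)]$ with probability at least $3/4$, and applying a Chernoff bound separately to the two events ``$\td{Q}\super{i} \geq Q(p)$'' and ``$\td{Q}\super{i} \leq Q(cp)$'' across the $M$ independent trials shows that the median lies in $[Q(p), Q(cp)]$ except with probability $\delta$. Since each trial performs at most $c'/\sqrt{p}$ quantum experiments, the total number of experiments is $\bo{\log(1/\delta)/\sqrt{p}}$, as claimed.
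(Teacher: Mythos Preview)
Your proof is correct and follows the same high-level structure as the paper (analyze a single trial, then amplify by taking the median over $\ceil{6\log(1/\delta)}$ independent runs), but the details in both directions are genuinely different.

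For the direction $\td{Q}\super{i} \geq Q(p)$, both arguments bound the expected time for the chain to first reach $Q(p)$ and then invoke Markov. The paper decomposes the range below $Q(p)$ into dyadic shells $[Q(2^{-(k-1)}),Q(2^{-k}))$ and bounds the contribution of each shell separately to arrive at $\ex{T^-} \leq 19c_1/\sqrt{p}$. Your elementary inequality $\frac{q_k}{F_k\sqrt{F_{k+1}}} \leq 2\pt[\big]{\frac{1}{\sqrt{F_{k+1}}}-\frac{1}{\sqrt{F_k}}}$ makes the whole sum telescope to $2/\sqrt{F_{k^*}} \leq 2/\sqrt{p}$, giving the same conclusion more directly and with a better constant.

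For the direction $\td{Q}\super{i} \leq Q(cp)$, you have put your finger on a real subtlety that the paper's proof glosses over. The paper argues only that once the chain sits at some $y_j \geq Q(cp)$, the \emph{next} sampling step alone exceeds the entire budget; but that merely pins the output to the first $y_j \geq Q(cp)$, which may already be strictly above $Q(cp)$ if the chain overshot. Your tail estimate $\pr{p_{J-1} > r\cdot cp} \leq 1/r$ --- which does telescope exactly as you describe, via $q_k/(F_kF_{k+1}) = 1/F_{k+1}-1/F_k$ and $F_{k^{**}+1} < cp$ --- combined with the independence of runtime and sampled value inside Lemma~\ref{Lem:IntervalSampl}, controls the \emph{pre}-crossing step $T_{J-1}$ directly and closes this gap. (Minor quibble: the success probability you want is at least $1-\tfrac{1}{10}-\tfrac{1}{r}$ by a union bound rather than $(9/10)(1-1/r)$, but this is immaterial.)
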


\begin{proof}
  Let $c_0$, $c_1$ be the universal constants mentioned in Lemma~\ref{Lem:IntervalSampl}, and set $c = c_0^2/(c_1^2\sqrt{191})$ and $c' = 190c_1$. Fix $i$ and consider the sequence $(y_j)_{j \geq 0}$ that would be computed during the $i$-th execution of steps 1.a-1.c if the stopping condition on $C$ was removed. We prove that immediately after the $c'/\sqrt{p}$-th quantum experiment is performed (which may occur during the computation of $y_{j+1}$), the current value of $y_j$ satisfies $Q(p) \leq y_j \leq Q(cp)$ with probability at least $(9/10)^2$. The analysis is done in two parts.

  First, let $x^- = Q(p)$ and denote by $T^-$ the number of experiments performed until $y_j$ becomes larger than or equal to $x^-$. According to Lemma~\ref{Lem:infiniteSampling}, the probability that a given $x$ occurs in the sequence $(y_j)_{j \geq 0}$ is equal to $\pr{X = x \given X \geq x}$. Moreover, using Lemma~\ref{Lem:IntervalSampl}, the expected number of experiments performed at step 1.b when $y_j = x$ is at most $c_1/\sqrt{\pr{X > x}}$. Consequently, we have
   \[\ex{T^-} \leq c_1 \sum_{x < x^-} \frac{\pr{X = x \given X \geq x}}{\sqrt{\pr{X > x}}}.\]
  Suppose that $Q(1) \neq x^-$ (otherwise $T^- = 0$). We upper bound the above sum by splitting it into several parts as follows. Define $Q_k = Q(2^{-k})$ for $k \geq 0$ and let $\ell$ be the largest integer such that $Q(2^{-\ell}) < x^-$. For each $1 \leq k \leq \ell$ such that $Q_{k-1} \neq Q_k$, we have
   \begin{align*}
    \sum_{Q_{k-1} \leq x < Q_k} \frac{\pr{X = x \given X \geq x}}{\sqrt{\pr{X > x}}}
      & \leq \frac{1}{\sqrt{\pr{X > Q_{k-1}}}} + \sum_{Q_{k-1} < x < Q_k} \frac{\pr{X = x}}{\pr{X > x}^{3/2}} \\
      & \leq \frac{1}{\sqrt{\pr{X \geq Q_k}}} + \frac{\pr{X > Q_{k-1}}}{\pr{X \geq Q_k}^{3/2}} \\
      & \leq \frac{1}{\sqrt{2^{-k}}} + \frac{2^{-(k-1)}}{2^{-3k/2}} \\
      & \leq 2^{k/2 + 2}.
   \end{align*}
  Similarly, $\sum_{Q_{\ell} \leq x < x^-} \frac{\pr{X = x \given X \geq x}}{\sqrt{\pr{X > x}}} \leq 2^{\ell/2} + 2^{- \ell + 1}/p^{3/2}$. Thus,
    $\ex{T^-} \leq c_1 \pt[\big]{\sum_{k=1}^{\ell} 2^{k/2 + 2} + 2^{\ell/2} + 2^{- \ell + 1}/p^{3/2}} \leq 19c_1/\sqrt{p}$
  where we used that $\log(1/p) - 1 \leq \ell < \log(1/p)$ since $Q_{\ell} < Q(p) \leq Q_{\ell+1}$. By Markov's inequality, $\pr{T^- \leq 190c_1/\sqrt{p}} \geq 9/10$.

  Secondly, let $x^+ = Q(cp)$ and denote by $T^+$ the number of experiments performed at step 1.b to sample $y_{j+1}$ when $y_j \geq x^+$. According to Lemma~\ref{Lem:IntervalSampl}, we have $\pr{T^+ \geq c_0/\sqrt{\pr{X > y_j}}} \geq 9/10$. Moreover, $\pr{X > y_j} \leq cp = c_0^2/(c_1^2\sqrt{191})p$ by definition of $x^+$. Thus, $\pr{T^+ \geq 191c_1/\sqrt{p}} \geq 9/10$.

  We conclude that step 1.b is interrupted when the value $\td{Q}\super{i}$ satisfies $Q(p) \leq \td{Q}\super{i} \leq Q(cp)$ with probability at least $(9/10)^2$. Thus, by the Chernoff bound, the output $\td{Q}$ satisfies $Q(p) \leq \td{Q} \leq Q(cp)$ with probability at least $1-\delta$. The total number of experiments is guaranteed to be $\bo{\log(1/\delta)/\sqrt{p}}$ by our use of the counter $C$.
\end{proof}

\section{Sub-Gaussian estimator}
\label{Sec:SubGaussian}

In this section, we present the main quantum algorithm for estimating the mean of a random variable with a near-quadratic speedup over the classical sub-Gaussian estimators. Our result uses the following \emph{Bernoulli estimator}, which is a well-known adaptation of the amplitude estimation algorithm to the mean estimation problem~\cite{BHMT02j,Ter99d,Mon15j}. The Bernoulli estimator allows us to estimate the mean of the truncated random variable $X \ind{a < X \leq b}$ for any $a,b$.

\begin{proposition}[\sc Bernoulli estimator]
  \label{Prop:ZeroOne}
  There exists a quantum algorithm, called the \emph{Bernoulli estimator}, with the following properties.
  Let $X$ be a \qrv\ and set as input a time parameter $\nb \geq 0$, two range values $0 \leq a < b$, and a real $\delta \in (0,1)$ such that $\nb \geq \log(1/\delta)$. Then, the Bernoulli estimator $\bern(X,\nb,a,b,\delta)$ outputs a mean estimate $\mut_{a,b}$ of $\mu_{a,b} = \ex{X \ind{a < X \leq b}}$ such that $\abs{\mut_{a,b} - \mu_{a,b}} \leq \frac{\sqrt{b \mu_{a,b}} \log(1/\delta)}{\nb} + \frac{b\log(1/\delta)^2}{\nb^2}$. It performs~$\bo{\nb}$ quantum experiments.
\end{proposition}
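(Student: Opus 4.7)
The plan is to reduce the estimation of $\mu_{a,b}$ to standard amplitude estimation by using the rotation oracle $R_{a,b}$ to encode $\mu_{a,b}/b$ as a squared amplitude, then boost the constant success probability of amplitude estimation to $1-\delta$ by a median trick.

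First, I would combine the unitary $U$ from the q-variable generating $X$ with the rotation oracle $R_{a,b}$ of Assumption~\ref{Assp:Rot} to define $W = R_{a,b}(U \otimes I)$ acting on $\Hil \otimes \C^2$. Using the definitions of $U$ and $R_{a,b}$, and the fact that $0 \leq a < b$ forces $X(\omega) \geq 0$ whenever $a < X(\omega) \leq b$ (so absolute values may be dropped), a direct calculation gives
\[
W\qub\ket{0} \;=\; \sqrt{1 - \mu_{a,b}/b}\,\ket{\phi_0}\ket{0} + \sqrt{\mu_{a,b}/b}\,\ket{\phi_1}\ket{1},
\]
for some unit states $\ket{\phi_0},\ket{\phi_1}$, because the squared amplitude on the $\ket{1}$ branch equals $\sum_{\omega : a < X(\omega) \leq b} p(\omega)\,X(\omega)/b = \mu_{a,b}/b$.

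Next, I would run standard amplitude estimation (the BHMT algorithm) on $W$ with target projector $I\otimes\proj{1}$, using $M = \lceil n/\log(1/\delta)\rceil$ applications. This yields an estimate $\tilde{a}$ of $a := \mu_{a,b}/b$ satisfying, with probability at least $8/\pi^2$,
\[
\abs{\tilde{a} - a} \;\leq\; \frac{2\pi\sqrt{a(1-a)}}{M} + \frac{\pi^2}{M^2}.
\]
Setting $\tilde{\mu}_{a,b} = b\tilde{a}$ and using $a(1-a) \leq a = \mu_{a,b}/b$ gives the single-run error bound
\[
\abs{\tilde{\mu}_{a,b} - \mu_{a,b}} \;=\; O\!\pt*{\frac{\sqrt{b\mu_{a,b}}\log(1/\delta)}{n} + \frac{b\log(1/\delta)^2}{n^2}}.
\]

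To upgrade the success probability from a constant to $1-\delta$, I would run the above procedure $\lceil K\log(1/\delta)\rceil$ times independently for a suitable constant $K$, and output the median of the resulting estimates $\tilde{\mu}_{a,b}^{(1)},\dots,\tilde{\mu}_{a,b}^{(K\log(1/\delta))}$; a Chernoff bound shows that the median lies within the target error with probability at least $1-\delta$. The total number of quantum experiments is $O(M\log(1/\delta)) = O(n)$, using the hypothesis $n \geq \log(1/\delta)$ to ensure $M \geq 1$. The only non-routine step is verifying the amplitude decomposition of $W\qub\ket{0}$ from the definitions of $U$ and $R_{a,b}$; the rest is a direct invocation of amplitude estimation combined with the standard median boosting.
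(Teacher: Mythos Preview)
Your proposal is correct and follows essentially the same approach as the paper: build $W = R_{a,b}(U\otimes I)$, verify that the squared amplitude on the $\ket{1}$ branch equals $\mu_{a,b}/b$, apply amplitude estimation with $\Theta(n/\log(1/\delta))$ iterations, rescale by $b$, and boost via the median of $\Theta(\log(1/\delta))$ repetitions. The only cosmetic difference is that the paper chooses the iteration count with an explicit $2\pi$ factor so that the constants in the error bound match the statement exactly rather than up to $O(\cdot)$.
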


\begin{proof}
  Let $(\Hil,U,M)$ be a q-variable generating $X$. Using the rotation oracle $R_{a,b}$ from Assumption~\ref{Assp:Rot}, we define the unitary algorithm $V = R_{a,b} (U \otimes I)$ acting on $\Hil \otimes \C^2$. In order to simplify notations, let us first assume that the random variable $X$ is distributed in the interval $(a,b)$. Then, $\mu = \mu_{a,b}$ and by definition of $R_{a,b}$ and $U$ (Section~\ref{Sec:ModelSampling}) we have,
  \begin{align*}
    V \qub
     & = \sum_{\omega \in \Omega} \sqrt{p(\omega)} \ket{\omega} \pt*{\sqrt{1-\frac{X(\omega)}{b}}\ket{0} + \sqrt{\frac{X(\omega)}{b}} \ket{1}} \\
     & = \sqrt{1 - \frac{\mu}{b}} \pt[\Bigg]{\sum_{\omega \in \Omega} \sqrt{\frac{p(\omega) (b - X(\omega))}{b - \mu}}  \ket{\omega}} \ket{0} + \sqrt{\frac{\mu}{b}} \pt*{\sum_{\omega \in \Omega} \sqrt{\frac{p(\omega) X(\omega)}{\mu}} \ket{\omega}} \ket{1}.
  \end{align*}
  Thus, there exist some unit states $\ket{\psi_0}, \ket{\psi_1}$ such that $V \qub = \sqrt{1 - \frac{\mu}{b}}\ket{\psi_0} + \sqrt{\frac{\mu}{b}} \ket{\psi_1}$ and $(I \otimes \proj{1}) V \qub = \sqrt{\frac{\mu}{b}} \ket{\psi_1}$. If $X$ takes values outside the interval $(a,b)$ then the same result holds with $\mu_{a,b}$ in place of $\mu$ and a different definition of $\ket{\psi_0}, \ket{\psi_1}$.

  Consider the output $\td{v}$ of the amplitude estimation algorithm $\aest\pt[\big]{V, \Pi, \ceil[\big]{\frac{2\pi \nb}{\log(1/\delta)}}}$ (Theorem~\ref{Thm:AE}) where $\Pi = I \otimes \proj{1}$. Then, the estimate $b \td{v}$ satisfies the statement of the proposition with probability $8/\pi^2$ by Theorem~\ref{Thm:AE}. The Bernoulli estimator consists of running $\ceil{6\log(1/\delta)}$  copies of $\aest\pt[\big]{V,\Pi,\ceil[\big]{\frac{2\pi \nb}{\log(1/\delta)}}}$ and outputting the median of the results. The success probability is at least $1-\delta$ by the Chernoff bound.
\end{proof}

The Bernoulli estimator can estimate the mean of a non-negative \qrv~$X$ by setting $a = 0$ and $b = \max X$. However, its performance is worse than that of the classical sub-Gaussian estimators when the maximum of $X$ is large compared to its variance. Our quantum sub-Gaussian estimator (Algorithm~\ref{Alg:SubGaussian}) uses the Bernoulli estimator in a more subtle way, and in combination with the quantile estimation algorithm.

\algobox{Alg:SubGaussian}{Sub-Gaussian estimator, $\subgauss(X,\nb,\delta)$.}{
\begin{enumerate}[leftmargin=*]
  \item Set $k = \log \nb$ and $\mb = d \nb \sqrt{\log \nb} \frac{\log(9k/\delta)}{\log(1/\delta)}$, where $d > 1$ is a constant chosen in the proof of Theorem~\ref{Thm:SubGaussian} (if $k$ is not an integer, round $\nb$ to the next power of two).
  \item Compute the median $\eta$ of $\lceil 30 \log(2/\delta) \rceil$ classical samples from $X$ and define the non-negative random variables
  \[Y^+ = (X - \eta) \ind{X \geq \eta} \quad \text{and} \quad Y^- = - (X - \eta) \ind{X \leq \eta}.\]
  \item Compute an estimate $\mut_{Y_+}$ of $\ex{Y_+}$ and an estimate $\mut_{Y_-}$ of $\ex{Y_-}$ by executing the following steps with $Y := Y_+$ and $Y := Y_-$ respectively:
  \begin{enumerate}
    \item Compute an estimate $\td{Q}$ of the quantile of order $p = \pt*{\frac{\log(1/\delta)}{6\nb}}^2$ of $Y$ with failure probability $\delta/8$ by using the \hyperref[Thm:Quantile]{quantile estimation} algorithm $\quant(Y,p,\delta/8)$.
    \item Define $a_{-1} = 0$ and $a_{\ell} = \frac{2^{\ell}}{\nb} \td{Q}$ for $\ell \geq 0$. Compute an estimate $\mut_{\ell}$ of $\ex{Y \ind{a_{\ell - 1} < Y \leq a_{\ell}}}$ with failure probability $\delta/(9k)$ for each $0 \leq \ell \leq k$, by using the \hyperref[Prop:ZeroOne]{Bernoulli estimator} $\bern(Y,\mb,a_{\ell - 1},a_{\ell},\delta/(9k))$ with $\mb$ quantum experiments.
    \item Set $\mut_Y = \sum_{\ell = 0}^{k} \mut_{\ell}$.
  \end{enumerate}
  \item Output $\mut = \eta + \mut_{Y_+} - \mut_{Y_-}$.
\end{enumerate}
}

\begin{theorem}[\sc Sub-Gaussian estimator]
  \label{Thm:SubGaussian}
  Let $X$ be a \qrv\ with mean~$\mu$ and variance $\sigma^2$. Given a time parameter $\nb$ and a real $\delta \in (0,1)$ such that $\nb \geq \log(1/\delta)$, the \emph{sub-Gaussian estimator} $\subgauss(X,\nb,\delta)$ (Algorithm~\ref{Alg:SubGaussian}) outputs a mean estimate~$\mut$ such that,
    \[\pr*{\abs{\mut - \mu} \leq \frac{\sigma \log(1/\delta)}{\nb}} \geq 1-\delta.\]
  The algorithm performs $\bo{\nb \log^{3/2}(\nb) \log\log(\nb)}$ quantum experiments.
\end{theorem}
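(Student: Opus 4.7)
The plan is to decompose the estimation error into three manageable pieces: the median concentration, the truncation bias above the quantile $\td{Q}$, and the cumulative error of the Bernoulli estimators across the dyadic intervals $[a_{\ell-1}, a_\ell]$. The choice of parameters $k = \log \nb$, $p = (\log(1/\delta)/(6\nb))^2$ and $\mb = d\nb\sqrt{\log \nb}\,\log(9k/\delta)/\log(1/\delta)$ is engineered so that each contribution ends up of order $\sigma\log(1/\delta)/\nb$.

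First, I would show that the empirical median $\eta$ satisfies $|\eta - \mu| \leq 2\sigma$ with probability at least $1-\delta/3$. By Chebyshev's inequality each sample of $X$ lies in $[\mu-2\sigma, \mu+2\sigma]$ with probability at least $3/4$, and a Chernoff bound over the $\lceil 30\log(2/\delta)\rceil$ samples gives the desired tail. On this event $\ex{(Y^\pm)^2} \leq \ex{(X-\eta)^2} = \sigma^2 + (\mu-\eta)^2 \leq 5\sigma^2$, and from $\ind{X\geq\eta}+\ind{X\leq\eta} = 1 + \ind{X=\eta}$ together with $(X-\eta)\ind{X=\eta}=0$ one obtains $\mu = \eta + \ex{Y^+} - \ex{Y^-}$. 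Thus it suffices to estimate each of $\ex{Y^\pm}$ within error $\sigma\log(1/\delta)/(2\nb)$.

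Next, I would analyse one side (say $Y := Y^+$, with $\ex{Y^2} \leq 5\sigma^2$). By Theorem~\ref{Thm:Quantile} the quantile estimate obeys $Q(p) \leq \td{Q} \leq Q(cp)$ with probability at least $1-\delta/8$. Markov applied to $Y^2$ yields both $Q(cp) \leq O(\sigma/\sqrt{p}) = O(\sigma\nb/\log(1/\delta))$ and $\pr{Y > \td{Q}} \leq p$, so by Cauchy--Schwarz the truncation bias is $\ex{Y\ind{Y > \td{Q}}} \leq \sqrt{\ex{Y^2}\pr{Y > \td{Q}}} = O(\sigma\log(1/\delta)/\nb)$. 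For the Bernoulli sum, write $\mu_\ell = \ex{Y\ind{a_{\ell-1} < Y \leq a_\ell}}$; using $a_\ell = 2a_{\ell-1}$ for $\ell \geq 1$ yields $a_\ell\mu_\ell \leq 2\ex{Y^2\ind{a_{\ell-1} < Y \leq a_\ell}}$, which telescopes to $\sum_{\ell=0}^{k} a_\ell \mu_\ell = O(\sigma^2)$ (the $\ell=0$ term being absorbed via $a_0\mu_0 \leq a_0^2 = O(\sigma^2/\log(1/\delta)^2)$). Cauchy--Schwarz then gives $\sum_\ell \sqrt{a_\ell\mu_\ell} = O(\sigma\sqrt{k})$, so the first-order Bernoulli error totals $O(\sigma\sqrt{\log\nb}\,\log(9k/\delta)/\mb) = O(\sigma\log(1/\delta)/(d\nb))$, and the second-order contribution $\sum_\ell a_\ell\log(9k/\delta)^2/\mb^2 \leq 2\td{Q}\log(9k/\delta)^2/\mb^2 = O(\sigma\log(1/\delta)/(d^2\nb\log\nb))$ is absorbed by choosing $d$ sufficiently large.

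Finally, a union bound over the median event, the two quantile calls (each failing with probability $\leq \delta/8$) and the $2(k+1)$ Bernoulli calls (each failing with probability $\leq \delta/(9k)$) gives total failure probability at most $\delta$, and summing the truncation bias and Bernoulli errors on both sides yields $|\mut - \mu| \leq \sigma\log(1/\delta)/\nb$. The complexity is dominated by the Bernoulli calls, costing $O(k\mb) = O(\nb\log^{3/2}(\nb)\cdot\log(9k/\delta)/\log(1/\delta))$; the last factor is $O(\log\log\nb)$ after splitting into the cases $\delta \leq 1/\log\nb$ and $\delta > 1/\log\nb$ and using $\nb \geq \log(1/\delta)$. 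The main obstacle is the tight bookkeeping in step 2: identifying the dyadic telescoping that brings $\sum_\ell a_\ell\mu_\ell$ down to $O(\sigma^2)$ (rather than the naive $O(k\sigma^2)$), handling the boundary interval $\ell=0$ separately via the explicit bound on $\td{Q}$, and confirming that the higher-order Bernoulli error is dominated rather than dominant.
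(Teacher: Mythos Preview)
Your proposal is correct and follows essentially the same approach as the paper: median concentration for $\eta$, the Markov bound $\td{Q} \leq O(\sqrt{\ex{Y^2}}/\sqrt{p})$ and the Cauchy--Schwarz truncation bias, the dyadic telescoping $a_\ell\mu_\ell \leq 2\ex{Y^2\ind{a_{\ell-1}<Y\leq a_\ell}}$ followed by Cauchy--Schwarz on $\sum_\ell\sqrt{a_\ell\mu_\ell}$, and the geometric sum $\sum_\ell a_\ell \leq 2\td{Q}$ for the second-order Bernoulli term. The only cosmetic differences are that the paper bounds $\sqrt{\ex{(X-\eta)^2}}\leq 3\sigma$ via the triangle inequality (rather than your direct $\ex{(X-\eta)^2}\leq 5\sigma^2$) and does not spell out the $\delta$ case split in the complexity bound.
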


\begin{proof}
  First, by standard concentration inequalities, the median $\eta$ computed at step 2 satisfies $\abs{\eta - \mu} \leq 2\sigma$ with probability at least $1-\delta/2$. Moreover, if $\abs{\eta - \mu} \leq 2\sigma$ then $\sqrt{\ex{(X-\eta)^2}} = \sqrt{\ex{(X - \mu + \mu - \eta)^2}} \leq \sqrt{\ex{(X - \mu)^2}} + \abs{\mu - \eta} \leq 3\sigma$, by using the triangle inequality. Below we prove that for any non-negative random variable $Y$ the estimate $\mut_Y$ of $\mu_Y = \ex{Y}$ computed at step 3 satisfies
    \begin{equation}
      \label{Eq:subGaussian}
      \abs{\mut_Y - \mu_Y} \leq \frac{\sqrt{\ex{Y^2}} \log(1/\delta)}{5\nb}
    \end{equation}
  with probability at least $1-\delta/4$. Using the fact that $X = \eta + Y_+ - Y_-$ and $(X-\eta)^2 = Y_+^2 + Y_-^2$, we can conclude that
   \[\abs{\mut - \mu}
      \leq \frac{\pt*{\sqrt{\ex{Y_+^2}}+\sqrt{\ex{Y_-^2}}} \log(1/\delta)}{5\nb}
      \leq \frac{\sqrt{2\ex{(X-\eta)^2}} \log(1/\delta)}{5\nb}
      \leq \frac{\sigma \log(1/\delta)}{\nb}\]
   with probability at least $1 - \delta$. The algorithm performs $\bo{\log(1/\delta)} \leq \bo{\nb}$ classical experiments during step 2, $\bo{\log(1/\delta)/\sqrt{p}} \leq \bo{\nb}$ quantum experiments during step 3.a, and $\bo{k \mb} \leq \bo{\nb \log^{3/2}(\nb) \log\log(\nb)}$ quantum experiments during step 3.b.

  We now turn to the proof of Equation~(\ref{Eq:subGaussian}). We make the assumption that all the subroutines used in step 3 are successful, which is the case with probability at least $(1 - \delta/8)(1-\delta/(9k))^{k+1} \geq 1 - \delta/4$. First, according to Theorem~\ref{Thm:Quantile}, we have $Q(p) \leq \td{Q} \leq Q(cp)$ for some universal constant $c$. It implies that $cp \leq \pr{Y \geq Q(cp)} \leq \pr{Y \geq \td{Q}} \leq \ex{Y^2}/\td{Q}^2$, where the first two inequalities are by definition of the quantile function~$Q$, and the last inequality is a standard fact. Consequently, by our choice of $p$,
    \begin{equation}
      \label{Eq:quantileGaussian}
      \td{Q} \leq \frac{6\nb \sqrt{\ex{Y^2}}}{\sqrt{c}\log(1/\delta)}.
    \end{equation}
  Next, we upper bound the expectation of the part of $Y$ that is above the largest threshold $a_k = \td{Q}$ considered in step 3.b. By Cauchy--Schwarz' inequality, we have $\ex{Y \ind{Y > \td{Q}}} \leq \sqrt{\ex{Y^2} \pr{Y > \td{Q}}}$. Moreover, by definition of $Q$, $\pr{Y > \td{Q}} \leq \pr{Y > Q(p)} \leq p$. Thus,
    \begin{equation}
      \label{Eq:tailGaussian}
      \ex{Y \ind{Y > \td{Q}}} \leq \frac{\sqrt{\ex{Y^2}}\log(1/\delta)}{6\nb}.
    \end{equation}
  The expectation of $Y$ is decomposed into the sum
    $\mu_Y = \sum_{\ell = 0}^k \mu_{\ell} + \ex{Y \ind{Y > a_k}}$,
  where $\mu_{\ell} = \ex{Y \ind{a_{\ell - 1} < Y \leq a_{\ell}}}$ is estimated at step 3.b. We have $\abs{\td{\mu}_{\ell} - \mu_{\ell}} \leq \frac{ \sqrt{a_{\ell} \mu_{\ell}} \log(1/\delta)}{d \nb \sqrt{\log \nb}} + \frac{a_{\ell} \log(1/\delta)^2}{d^2 \nb^2 \log \nb}$ for all $0 \leq \ell \leq k$ according to Proposition~\ref{Prop:ZeroOne}. Thus, by the triangle inequality,
  \begin{align*}
    \abs{\mut_Y - \mu_Y}
      & \leq \sum_{\ell = 0}^k \abs*{\mut_{\ell} - \mu_{\ell}} + \ex{Y \ind{Y > a_k}} \\
      & \leq \sum_{\ell = 0}^k \frac{\sqrt{a_{\ell} \mu_{\ell}} \log\pt[\big]{\frac{1}{\delta}}}{d \nb \sqrt{\log \nb}} + \sum_{\ell = 0}^k \frac{a_{\ell} \log\pt[\big]{\frac{1}{\delta}}^2}{d^2 \nb^2 \log \nb} + \ex{Y \ind{Y > a_k}} \\
      & \leq \frac{\td{Q} \log\pt[\big]{\frac{1}{\delta}}}{d \nb^2 \sqrt{\log \nb}} + \sum_{\ell = 1}^k \frac{\sqrt{2 \ex{Y^2 \ind{a_{\ell - 1} < Y \leq a_{\ell}}}} \log\pt[\big]{\frac{1}{\delta}}}{d \nb \sqrt{\log \nb}} + \frac{2 \td{Q} \log\pt[\big]{\frac{1}{\delta}}^2}{d^2 \nb^2 \log \nb} + \ex{Y \ind{Y > a_k}} \\
      & \leq  \frac{\sqrt{2 k} \sqrt{\sum_{\ell = 1}^k \ex{Y^2 \ind{a_{\ell - 1} < Y \leq a_{\ell}}}} \log\pt[\big]{\frac{1}{\delta}}}{d \nb \sqrt{\log \nb}} + \frac{3 \td{Q} \log\pt[\big]{\frac{1}{\delta}}^2}{d \nb^2 \sqrt{\log \nb}} + \ex{Y \ind{Y > a_k}} \\
      & \leq  \frac{\sqrt{2 k} \sqrt{\ex{Y^2}} \log\pt[\big]{\frac{1}{\delta}}}{d \nb \sqrt{\log \nb}} + \frac{3 \td{Q} \log\pt[\big]{\frac{1}{\delta}}^2}{d \nb^2 \sqrt{\log \nb}} + \ex{Y \ind{Y > a_k}} \\
      & \leq  \frac{\sqrt{2} \sqrt{\ex{Y^2}} \log\pt[\big]{\frac{1}{\delta}}}{d \nb} + \frac{18 \sqrt{\ex{Y^2}} \log\pt[\big]{\frac{1}{\delta}}}{\sqrt{c} d \nb \sqrt{\log \nb}} + \frac{\sqrt{\ex{Y^2}}\log\pt[\big]{\frac{1}{\delta}}}{6\nb} \\
      & \leq \frac{\sqrt{\ex{Y^2}} \log\pt[\big]{\frac{1}{\delta}}}{5\nb}
  \end{align*}
  where the third step uses $a_0 \mu_0 \leq a_0^2 = (\td{Q}/\nb)^2$ and $a_{\ell}\mu_{\ell} \leq (a_{\ell}/a_{\ell-1}) \ex{Y^2 \ind{a_{\ell - 1} < Y \leq a_{\ell}}} \leq 2 \ex{Y^2 \ind{a_{\ell - 1} < Y \leq a_{\ell}}}$ when $\ell \geq 1$, the fourth step uses the Cauchy--Schwarz inequality, the sixth step uses Equations~(\ref{Eq:quantileGaussian}) and~(\ref{Eq:tailGaussian}), and in the last step we choose $d = 600/\sqrt{c}$.
\end{proof}

\section{\texorpdfstring{$(\eps,\delta)$-Estimators}{(epsilon,delta)-Estimators}}
\label{Sec:epsdelta}
We study the $(\eps,\delta)$-approximation problem under two different scenarios. First, we consider the case where we know an upper bound $\ch$ on the coefficient of variation $\abs{\sigma/\mu}$. As a direct consequence of Theorem~\ref{Thm:SubGaussian} we obtain the following estimator that subsumes a similar result shown in~\cite{HM19c} for non-negative random variables.

\begin{corollary}[Relative estimator]
  \label{Cor:Chebyshev}
  There exists a quantum algorithm with the following properties.
  Let $X$ be a \qrv\ with mean $\mu$ and variance $\sigma^2$, and set as input a value $\ch \geq \abs{\sigma/\mu}$ and two reals $\eps,\delta \in (0,1)$. Then, the algorithm outputs a mean estimate $\mut$ such that $\pr*{\abs{\mut - \mu} > \eps \abs{\mu}} \leq \delta$ and it performs
    $\wbo*{\frac{\ch}{\eps}\log(1/\delta)}$
  quantum experiments.
\end{corollary}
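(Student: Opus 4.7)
The plan is to apply the sub-Gaussian estimator (Theorem~\ref{Thm:SubGaussian}) directly with a time parameter $\nb$ chosen to absorb the factor $\Delta$. Specifically, I would set
\[
  \nb = \max\!\pt*{\ceil*{\tfrac{\ch \log(1/\delta)}{\eps}},\; \ceil*{\log(1/\delta)}}
\]
so that the precondition $\nb \geq \log(1/\delta)$ of Theorem~\ref{Thm:SubGaussian} is satisfied, and then output the estimate $\mut = \subgauss(X,\nb,\delta)$.

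First I would verify correctness. Theorem~\ref{Thm:SubGaussian} guarantees that $\abs{\mut - \mu} \leq \frac{\sigma \log(1/\delta)}{\nb}$ with probability at least $1-\delta$. In the regime $\nb = \ceil{\ch \log(1/\delta)/\eps}$, this bound is at most $\frac{\eps \sigma}{\ch} \leq \eps \abs{\mu}$, using the hypothesis $\ch \geq \abs{\sigma/\mu}$. In the other regime $\nb = \ceil{\log(1/\delta)}$ (which only occurs when $\ch < \eps$), the bound becomes $\sigma \leq \ch \abs{\mu} < \eps \abs{\mu}$, so the conclusion holds in both cases.

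Next I would bound the complexity. Theorem~\ref{Thm:SubGaussian} performs $\bo{\nb \log^{3/2}(\nb) \log\log(\nb)}$ quantum experiments, and by construction $\nb = \bo{\tfrac{\ch}{\eps}\log(1/\delta) + \log(1/\delta)}$, so the total is $\wbo{\tfrac{\ch}{\eps}\log(1/\delta)}$ after absorbing polylogarithmic factors (and noting that the $+\log(1/\delta)$ summand is dominated up to constants in the nontrivial regime $\ch \geq \eps$).

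There is essentially no hard step here; this is a routine reduction. The only mild subtlety is ensuring the precondition $\nb \geq \log(1/\delta)$ holds even when $\ch/\eps$ is small, which is why I take a maximum in the definition of~$\nb$ and separately verify the bound in that regime using $\sigma \leq \ch \abs{\mu}$.
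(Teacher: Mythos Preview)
Your proposal is correct and follows exactly the paper's approach: run $\subgauss$ with $\nb \approx \frac{\ch}{\eps}\log(1/\delta)$. The paper's one-line proof simply sets $\nb = \frac{\ch}{\eps}\log(1/\delta)$ without the $\max$, so your version is in fact slightly more careful about the precondition $\nb \geq \log(1/\delta)$ in the edge case $\ch < \eps$.
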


\begin{proof}
  The algorithm runs the \hyperref[Thm:SubGaussian]{sub-Gaussian estimator} $\subgauss\pt[\big]{X,\frac{\ch}{\eps}\log(1/\delta),\delta}$.
\end{proof}

Next, we construct a parameter-free estimator that performs $\wbo[\big]{\pt[\big]{\frac{\sigma}{\eps \mu} + \frac{1}{\sqrt{\eps \mu}}} \log(1/\delta)}$ quantum experiments in expectation for any random variable distributed in $[0,1]$. We follow an approach similar to the classical $\mathcal{AA}$ algorithm described in~\cite{DKLR00j}. We first give a sequential estimator that approximates the mean with constant relative error and that performs $\bo{1/\sqrt{\mu}}$ quantum experiments in expectation. We use the term ``sequential'' in reference to sequential analysis techniques. The classical counterpart of this estimator is the Stopping Rule Algorithm in~\cite{DKLR00j}.

\begin{proposition}[\sc Sequential Bernoulli estimator]
  \label{Prop:SeqZeroOne}
  There is an algorithm, called the \emph{sequential Bernoulli estimator}, with the following properties.
  Let $X$ be a \qrv\ distributed in $[0,1]$ with mean $\mu$.
  Then, the sequential Bernoulli estimator $\sbern(X)$ outputs an estimate $\mut$ and performs a number $T$ of quantum experiments such that,
    \begin{enumerate}
      \item There is a universal constant $c \in (0,1)$ such that $\pr{\abs{\mut - \mu} \leq c \mu} \geq 7/8$.
      \item There is a universal constant $c'$ such that $\ex{T^2} = \ex{1/\mut} \leq c'/\mu$.
      \item There is a universal constant $c''$ such that $\ex{\sqrt{\mut}} \leq c''\sqrt{\mu}$.
    \end{enumerate}
\end{proposition}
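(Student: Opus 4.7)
The plan is to build the sequential Bernoulli estimator on top of the sequential amplitude estimator $\saest$ (Theorem~\ref{Thm:SeqAE}). Given a q-variable $(\Hil,U,M)$ generating $X$, and using the rotation oracle $R_{0,1}$ from Assumption~\ref{Assp:Rot}, I form the algorithm $V = R_{0,1}(U\otimes I)$ on $\Hil\otimes\C^2$; as in the proof of Proposition~\ref{Prop:ZeroOne}, the projector $\Pi = I\otimes\proj{1}$ satisfies $\norm{\Pi V\qub}^2 = \mu$. Running $\saest(V,\Pi)$ produces an estimate $\atd$ of $\sqrt{\mu}$, and I output $\mut = \atd^2$. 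By the stopping rule of $\saest$ the number of quantum experiments is $T = \ta{1/\atd} = \ta{1/\sqrt{\mut}}$, which directly yields the equality $\ex{T^2} = \ex{1/\mut}$ in property~2 (absorbing the constant into $c'$).

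For property~1, the constant-probability accuracy of $\saest$ supplies a universal $c_0 \in (0,1)$ such that $\abs{\atd - \sqrt{\mu}} \leq c_0\sqrt{\mu}$ holds with probability at least $7/8$, and squaring gives $\abs{\mut - \mu} \leq c\mu$ for the universal constant $c = \max\set{1-(1-c_0)^2,\, (1+c_0)^2-1}$ which lies in $(0,1)$.

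Properties~2 and~3 reduce to controlling the two-sided tails of the relative error of $\atd$. I decompose
$\ex{1/\mut} = \ex{(1/\mut)\ind{\mut\geq \mu/2}} + \ex{(1/\mut)\ind{\mut<\mu/2}}$
and similarly
$\ex{\sqrt{\mut}} = \ex{\sqrt{\mut}\ind{\mut\leq 2\mu}} + \ex{\sqrt{\mut}\ind{\mut>2\mu}}$.
The ``bulk'' terms are bounded by $2/\mu$ and $\sqrt{2\mu}$ directly. For the ``tail'' terms, I dyadically slice the events $\set{\mut\in[2^{-k-1}\mu,\,2^{-k}\mu]}$ for $k\geq 1$ and $\set{\mut\in[2^{k}\mu,\,2^{k+1}\mu]}$ for $k\geq 1$, bounding them by $(2^{k+1}/\mu)\pr{\mut\leq 2^{-k}\mu}$ and $\sqrt{2^{k+1}\mu}\pr{\mut\geq 2^{k}\mu}$ respectively. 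Provided one has geometric tails of the form $\pr{\mut\leq 2^{-k}\mu},\pr{\mut\geq 2^{k}\mu} \leq \alpha^k$ for some $\alpha<1/2$, both dyadic sums telescope to convergent geometric series, producing the required $\bo{1/\mu}$ and $\bo{\sqrt{\mu}}$ bounds.

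The main obstacle is establishing these sub-geometric tail bounds for $\saest$, since a bare sequential amplitude estimator only guarantees correctness with constant probability. I foresee two natural remedies: either sharpen the analysis of $\saest$ by showing that each extra round of its exponential search multiplies the failure probability by a constant strictly below $1/2$ (so that overshooting the critical query scale $1/\sqrt{\mu}$ by $k$ levels is exponentially unlikely), or wrap a cheap median-of-$\bo{1}$ boost around $\saest$ that preserves the expected cost $\bo{1/\sqrt{\mu}}$ while driving the failure probability geometrically low. Either way, once the exponential tail of $\atd$ is in hand, the dyadic bookkeeping above closes both expectation inequalities, completing the proof.
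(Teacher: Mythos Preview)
Your construction of $V = R_{0,1}(U\otimes I)$ with $\Pi = I\otimes\proj{1}$ so that $\norm{\Pi V\qub}^2 = \mu$ is exactly the paper's approach. However, two points deserve correction.

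First, a minor misreading: by Theorem~\ref{Thm:SeqAE}, $\saest(V,\Pi)$ outputs an estimate $\td p$ of $p = \mu$, not of $\sqrt{\mu}$. There is no squaring step; one simply sets $\mut = \td p$. (Your squaring also creates an unnecessary difficulty in property~1: your constant $c = (1+c_0)^2 - 1 = 2c_0 + c_0^2$ is only guaranteed to lie in $(0,1)$ when $c_0 < \sqrt{2}-1$, which Theorem~\ref{Thm:SeqAE} does not promise.)

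Second, and more importantly, the entire dyadic tail analysis and the ``obstacle'' you raise are unnecessary. Theorem~\ref{Thm:SeqAE} already states, verbatim, that $\ex{T^2} = \ex{1/\td p} \leq c'/p$ and $\ex{\sqrt{\td p}} \leq c''\sqrt{p}$. With $p = \mu$ and $\mut = \td p$ these are precisely properties~2 and~3 of Proposition~\ref{Prop:SeqZeroOne}. The underlying proof (Theorem~\ref{Thm:SeqAA}) bounds the moments $\ex{T^2}$ and $\ex{1/T}$ \emph{directly} from the exponential-search structure of $\saamp$, summing over rounds; it never passes through pointwise geometric tail bounds, and no median boosting is needed. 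Consequently the paper's proof of the proposition is a one-liner: replace $\aest$ by $\saest$ in the Bernoulli estimator and inherit the three properties from Theorem~\ref{Thm:SeqAE}. Your proposal reaches the same destination but takes a detour through an obstacle that the paper has already removed.
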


\begin{proof}
  The algorithm is identical to the one of Proposition~\ref{Prop:ZeroOne} with $a = 0$ and $b = 1$, except that the amplitude estimation algorithm is replaced with the sequential amplitude estimation algorithm (Theorem~\ref{Thm:SeqAE}). The algorithm inherits the properties proved in Theorem~\ref{Thm:SeqAE}.
\end{proof}

The expected number of experiments performed by the sequential Bernoulli estimator is $\ex{T} \leq \sqrt{\ex{T^2}} \leq 1/\sqrt{\mu}$. The output $\mut$ of the sequential Bernoulli estimator can be used in the Bernoulli estimator (Proposition~\ref{Prop:ZeroOne}) with parameter $\nb = 8\log(1/\delta)/(\eps \sqrt{\mut})$ to solve the $(\eps,\delta)$-approximation problem. However, the expected number of experiments performed with this approach is $\bo{\log(1/\delta)/(\eps \sqrt{\mu})}$. We propose a better algorithm with an improved dependence on $\eps$. The algorithms uses the sequential Bernoulli estimator and the sub-Gaussian estimator.

\algobox{Alg:SeqEstim}{Sequential $(\eps,\delta)$-estimator.}{
\begin{enumerate}
  \item For $i = 1,\dots,32\log(1/\delta)$:
  \begin{enumerate}
    \item Compute an estimate $\mut_X$ of $\mu = \ex{X}$ by using the sequential Bernoulli estimator $\sbern(X)$ (Proposition~\ref{Prop:SeqZeroOne}). 
    \item Let $Y$ denote the random variable $(X-X')^2/2$ where $X'$ is independent from $X$ and identically distributed. Compute an estimate $\mut_Y$ of $\mu_Y = \ex{Y}$ by using the sequential Bernoulli estimator $\sbern(Y)$ (Proposition~\ref{Prop:SeqZeroOne}). 
    Stop the computation if it performs more than $\frac{c_1}{\sqrt{\eps \mut_X}}$ quantum experiments (where $c_1$ is a constant chosen in the proof of Theorem~\ref{Thm:SeqEstim}) and set $\mut_Y = 0$.
    \item Compute a second estimate $\mut_X\super{i}$ of $\mu$ by using the sub-Gaussian estimator $\subgauss(X,\nb,15/16)$ (Theorem~\ref{Thm:SubGaussian}) with $\nb = c_2 \max\pt*{\frac{\sqrt{\mut_Y}}{\eps \mut_X}, \frac{1}{\sqrt{\eps \mut_X}}}$ (where $c_2$ is a constant chosen in the proof of Theorem~\ref{Thm:SeqEstim}).
  \end{enumerate}
  \item Output $\mut = \median\pt*{\mut_X\super{1},\dots,\mut_X\super{32\log(1/\delta)}}$.
\end{enumerate}
}

\begin{theorem}[\sc Sequential relative estimator]
  \label{Thm:SeqEstim}
  Let $X$ be a \qrv\ distributed in $[0,1]$ with mean $\mu$ and variance $\sigma^2$. Given two reals $\eps, \delta \in (0,1)$ the estimate~$\mut$ output by the \emph{sequential relative estimator} (Algorithm~\ref{Alg:SeqEstim}) satisfies
    $\pr*{\abs{\mut - \mu} > \eps \mu} \leq \delta$.
  The algorithm performs
    $\wbo[\big]{\pt[\big]{\frac{\sigma}{\eps \mu} + \frac{1}{\sqrt{\eps \mu}}} \log(1/\delta)}$
  quantum experiments in expectation.
\end{theorem}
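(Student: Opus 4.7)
The plan is to show that each iteration of Algorithm~\ref{Alg:SeqEstim} produces an estimate within $\eps\mu$ of $\mu$ with some constant probability strictly above $1/2$ and that its expected number of quantum experiments is $\wbo{\sigma/(\eps\mu) + 1/\sqrt{\eps\mu}}$; the median over $32\log(1/\delta)$ independent iterations then concentrates within $\eps\mu$ of $\mu$ with probability $\geq 1-\delta$ by a standard Chernoff bound, and linearity of expectation yields the complexity. Two facts set up the analysis: by Proposition~\ref{Prop:SeqZeroOne} applied to $X$, the rough estimate $\mut_X$ in step~1.a satisfies $\mut_X = \Theta(\mu)$ with probability at least $7/8$; and the random variable $Y=(X-X')^2/2$ lies in $[0,1/2]$ with $\ex{Y} = \sigma^2$, so the same proposition governs step~1.b when applied to $Y$.

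For correctness of a single iteration, I would split on the size of the variance. In the \emph{high-variance} regime $\sigma^2 \gtrsim \eps\mu$, the natural expected running time of $\sbern(Y)$ is $O(1/\sigma) \leq O(1/\sqrt{\eps\mu})$; choosing $c_1$ large enough, Markov's inequality ensures step~1.b terminates within its budget $c_1/\sqrt{\eps\mut_X}$ with large constant probability and returns $\mut_Y = \Theta(\sigma^2)$. The first argument of the $\max$ in step~1.c then dominates, giving $n = \Theta(\sigma/(\eps\mu))$, and the sub-Gaussian estimator (Theorem~\ref{Thm:SubGaussian}) yields error $O(\sigma/n) = O(\eps\mu)$. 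In the \emph{low-variance} regime $\sigma^2 \lesssim \eps\mu$, either $\mut_Y = \Theta(\sigma^2) \lesssim \eps\mu$ or the estimator is aborted and $\mut_Y = 0$; in both cases $\sqrt{\mut_Y}/(\eps\mut_X) \lesssim 1/\sqrt{\eps\mu}$, so the second argument of the $\max$ dominates, $n = \Theta(1/\sqrt{\eps\mu})$, and the sub-Gaussian error is $O(\sigma/n) = O(\sigma\sqrt{\eps\mu}) \leq O(\eps\mu)$ using $\sigma \leq O(\sqrt{\eps\mu})$. A union bound over the three sub-routines used in an iteration gives a per-iteration success probability strictly above $1/2$, enabling the median boost.

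For the expected per-iteration cost I would use all three bounds in Proposition~\ref{Prop:SeqZeroOne}. Step~1.a contributes $\ex{T_1} \leq \sqrt{\ex{T_1^2}} = O(1/\sqrt{\mu}) \leq O(1/\sqrt{\eps\mu})$; step~1.b is capped at $c_1/\sqrt{\eps\mut_X}$, so its cost is at most $(c_1/\sqrt{\eps})\ex{1/\sqrt{\mut_X}} = O(1/\sqrt{\eps\mu})$ using the third bound of the proposition. For step~1.c, the sub-Gaussian estimator uses $\wbo{n}$ experiments with $n = c_2\max(\sqrt{\mut_Y}/(\eps\mut_X),\, 1/\sqrt{\eps\mut_X})$; since the cap on step~1.b only decreases $\mut_Y$, we have $\sqrt{\mut_Y} \leq \sqrt{\mut_Y^{\mathrm{nat}}}$ almost surely, where $\mut_Y^{\mathrm{nat}}$ is the uncapped (and hence $\mut_X$-independent) output of $\sbern(Y)$. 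Combining the product-of-expectations bound with $\ex{\sqrt{\mut_Y^{\mathrm{nat}}}} \leq c''\sigma$ and $\ex{1/\mut_X} \leq c'/\mu$ yields $\ex{n} = O(\sigma/(\eps\mu) + 1/\sqrt{\eps\mu})$. A crude deterministic bound on $\log n$ then lifts this to $\wbo{\sigma/(\eps\mu) + 1/\sqrt{\eps\mu}}$ expected experiments per iteration, and multiplying by the $32\log(1/\delta)$ iterations gives the claimed complexity.

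The main obstacle I anticipate is choosing $c_1, c_2$ so that the regime split works cleanly: one must verify that the high-variance case actually terminates within budget with large constant probability, and that in the low-variance case the noisy $\mut_Y, \mut_X$ still force $n$ large enough for the sub-Gaussian error to remain $O(\eps\mu)$ without making it unnecessarily large. A secondary technical issue is the dependence between $\mut_X$ and $\mut_Y$ induced by the cap; the resolution is the observation above that capping only shrinks $\mut_Y$, so $\sqrt{\mut_Y}$ is stochastically dominated by the natural (independent) sequential Bernoulli output on $Y$, which restores the independence needed for the expectation calculation.
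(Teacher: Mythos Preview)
Your plan is correct and mirrors the paper's own proof almost exactly: the same split into a low-variance regime $\sigma \leq \sqrt{\eps\mu}$ (use the second term in the $\max$) and a high-variance regime $\sigma \geq \sqrt{\eps\mu}$ (Markov's inequality shows step~1.b finishes within budget, then use the first term), the same union bound to get per-iteration success probability above $1/2$, and the same independence-restoration trick (capping can only decrease $\mut_Y$) to bound the expected cost of step~1.c. One small slip: the bound $\ex{1/\sqrt{\mut_X}} = O(1/\sqrt{\mu})$ you need for the cost of step~1.b follows from the \emph{second} item of Proposition~\ref{Prop:SeqZeroOne} via Jensen, not the third; and in the low-variance correctness case you do not need to argue that the first term of the $\max$ is small, only that the second term already makes $n$ large enough.
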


\begin{proof}
  We prove that, for a fixed value of $i$, the estimate $\mut_X\super{i}$ computed at step 1.c satisfies $\pr[\big]{\abs{\mut_X\super{i} - \mu} \leq \eps \mu} \geq 5/8$ and the number of experiments performed during its computation is $\wbo[\big]{\pt[\big]{\frac{\sigma}{\eps \mu} + \frac{1}{\sqrt{\eps \mu}}}}$ in expectation. The theorem follows by the Chernoff bound and the linearity of expectation.

  Let $c,c',c''$ denote the constants mentioned in Proposition~\ref{Prop:SeqZeroOne}, and set $c_1 = 16c' \sqrt{(1 + c)}$ and $c_2 = 4(1+c)/\sqrt{1-c}$. We assume that $\abs{\mut_X - \mu} \leq c \mu$ at step 1.a, which is the case with probability at least $7/8$ by Proposition~\ref{Prop:SeqZeroOne}. The analysis of steps 1.b and 1.c is split into two cases to show that $\pr[\big]{\abs{\mut_X\super{i} - \mu} \leq \eps \mu} \geq 5/8$. First, if $\sigma \leq \sqrt{\eps\mu}$, then we can ignore step 1.b and consider the second term in the $\max$ at step 1.c. By Theorem~\ref{Thm:SubGaussian}, the estimate $\mut_X\super{i}$ satisfies
    $\abs{\mut_X\super{i} - \mu}
        \leq \frac{4 \sigma}{c_2/\sqrt{\eps \mut_X}}
        \leq \frac{4\sqrt{1+c}}{c_2} \eps \mu
        \leq \eps \mu$
  with probability $15/16$. Secondly, if $\sigma \geq \sqrt{\eps\mu}$, then by Proposition~\ref{Prop:SeqZeroOne} and the fact that $\mu_Y = \sigma^2$, the estimate $\mut_Y$ computed at step 1.b satisfies $\abs{\mut_Y - \sigma^2} \leq c \sigma^2$ with probability $7/8$ if we remove the stopping condition. Since we assumed that $\mut_X \leq (1+c) \mu$, the computation is interrupted if it performs more than $\frac{c_1}{\sqrt{\eps \mut_X}} \geq \frac{c_1}{\sqrt{(1 + c)\mu_Y}} = \frac{16c'}{\sqrt{\mu_Y}}$ experiments. However, by Proposition~\ref{Prop:SeqZeroOne} and Markov's inequality, the number of experiments performed by the sequential Bernoulli estimator at step 1.b is at most $16c'/\sqrt{\mu_Y}$ with probability at least $15/16$. Consequently, we can assume that $\mut_Y \geq (1-c)\sigma^2$ with success probability at least $7/8 \cdot 15/16$. In this case, by considering the first term in the $\max$ at step 1.c, the estimate $\mut_X\super{i}$ satisfies
    $\abs{\mut_X\super{i} - \mu}
        \leq \frac{4 \sigma}{c_2\sqrt{\mut_Y}/(\eps \mut_X)}
        \leq \frac{4(1+c)}{c_2\sqrt{1-c}} \eps \mu
        \leq \eps \mu$
  with probability $15/16$. The overall success probability is at least $(7/8)^2 (15/16)^2 \geq 5/8$.

  We now analyse the expected number of quantum experiments performed during the computation of $\mut_X\super{i}$. Step 1.a performs $\bo{1/\sqrt{\mu}}$ experiments in expectation by Proposition~\ref{Prop:SeqZeroOne}. Step 1.b is stopped after $\bo{1/(\sqrt{\eps\mu})}$ experiments in expectation since $\ex{1/\sqrt{\mut_X}} \leq \bo{1/\sqrt{\mu}}$ by Proposition~\ref{Prop:SeqZeroOne}. Step 1.c performs $\wbo[\Big]{\max\pt[\Big]{\frac{\sqrt{\mut_Y}}{\eps \mut_X}, \frac{1}{\sqrt{\eps \mut_X}}}}$ experiments by Theorem~\ref{Thm:SubGaussian}. The estimates $\mut_Y$ and $\mut_X$ are independent if we ignore the stopping condition at step 1.b, in which case $\ex*{\frac{\sqrt{\mut_Y}}{\mut_X}} = \ex*{\frac{1}{\mut_X}}\ex{\sqrt{\mut_Y}} \leq \bo*{\frac{\sigma}{\mu}}$ by Proposition~\ref{Prop:SeqZeroOne}. The stopping condition can only decrease this quantity. Thus, step 1.c performs $\wbo[\big]{\max\pt[\big]{\frac{\sigma}{\eps \mu}, \frac{1}{\sqrt{\eps \mu}}}}$ experiments in expectation
\end{proof}

\section{Lower bounds}
\label{Sec:LowerBoundMean}
We prove several lower bounds for the mean estimation problem under different scenarios. In Section~\ref{Sec:LowerGauss}, we study the number of experiments that must be performed to estimate the mean with a sub-Gaussian error rate. In Section~\ref{Sec:LowerEpsDelta}, we study the number of experiments needed to solve the $(\eps,\delta)$-approximation problem. Finally, in Section~\ref{Sec:stateLower}, we consider the mean estimation problem in the state-based model, where the input consists of several copies of a quantum state encoding a distribution.


\subsection{Sub-Gaussian estimation}
\label{Sec:LowerGauss}

We show that the quantum sub-Gaussian estimator described in Theorem~\ref{Thm:SubGaussian} is optimal up to a polylogarithmic factor. We make use of the following lower bound for Quantum Search in the small-error regime.

\begin{proposition}[Theorem 4 in \cite{BCdWZ99c}]
  \label{Prop:QSearchLower}
  Let $N > 0$, $1 \leq K \leq 0.9N$ and $\delta \geq 2^{-N}$. Let $T(N,K,\delta)$ be the minimum number of quantum queries any algorithm must use to decide with failure probability at most~$\delta$ whether a function $f : [N] \ra \rn$ has $0$ or $K$ preimages of~$1$. Then, $T(N,K,\delta) \geq \om{\sqrt{N/K}\log(1/\delta)}$.
\end{proposition}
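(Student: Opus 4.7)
My plan is to prove this small-error lower bound via the polynomial method of Beals, Buhrman, Cleve, Mosca, and de Wolf, combined with a Chebyshev-type polynomial estimate that captures the logarithmic dependence on $\delta$.

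First, I would use the standard fact that the acceptance probability of any $T$-query quantum algorithm acting on an oracle $f : [N] \to \{0,1\}$ is a real multilinear polynomial $P(x_1,\dots,x_N)$ of total degree at most $2T$ in the input bits. Since the promise problem is invariant under permutations of $[N]$, a Minsky--Papert symmetrization averages $P$ over $\mathrm{Sym}(N)$ and produces a univariate polynomial $p$ of the same degree satisfying $0 \le p(t) \le 1$ for every $t \in \{0,1,\dots,N\}$, together with $p(0) \le \delta$ and $p(K) \ge 1-\delta$ (after possibly swapping accept and reject). The quantum lower bound now reduces to the purely analytic claim that any such polynomial has $\deg p \ge \Omega(\sqrt{N/K}\log(1/\delta))$.

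The second step is the polynomial degree lower bound itself. I would rescale to $[-1,1]$ via $u = 2t/N - 1$, so that $p(0)$ lives at the endpoint $u = -1$ and $p(K)$ lives at a point shifted by $2K/N$ from that endpoint. The Chebyshev polynomial $T_d$ is the extremal reference: it satisfies $|T_d| \le 1$ on $[-1,1]$ while $T_d(-1 - \epsilon)$ grows like $\exp(\Omega(d\sqrt{\epsilon}))$ for small $\epsilon > 0$. After invoking an Ehlich--Zeller--Coppersmith--Rivlin type inequality to upgrade the integer-point bound $|p(t)| \le 1$ into a continuous $O(1)$ bound on $[-1,1]$, a standard extremal comparison between $p$ and $T_d$ forces $\exp(\Omega(d\sqrt{K/N})) \ge \Omega(1/\delta)$. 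Rearranging yields $d \ge \Omega(\sqrt{N/K}\log(1/\delta))$, and since $d \le 2T$ this is exactly the desired query lower bound.

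I expect the main obstacle to be controlling the high-degree regime cleanly. The Ehlich--Zeller transfer is only useful when $d = O(\sqrt{N})$, yet when $\log(1/\delta)$ is so large that $\sqrt{N/K}\log(1/\delta) \gg \sqrt{N}$ the continuous bound on $p$ can fail. I would handle this by working discretely, either by invoking a Markov-type inequality for polynomials bounded on the integer lattice $\{0,\dots,N\}$, or alternatively by a recursive amplification/composition reduction that turns an error-$\delta$ algorithm into a $\log(1/\delta)$-deep tower of constant-error subroutines and applies the standard $\Omega(\sqrt{N/K})$ Grover bound to each level. A secondary subtlety is dealing with the full range $1 \le K \le 0.9N$ uniformly: for large $K$ the Chebyshev estimate is cleanest when symmetrized around $t = N$ rather than $t = 0$, and as a fallback I would reduce to the small-$K$ case by restricting $f$ to a random subset of $[N]$ of size $\Theta(N/K)$ so that the restricted function has $\Theta(1)$ marked elements in expectation, then apply the established bound on that reduced instance.
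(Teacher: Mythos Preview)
The paper does not prove this proposition at all: it is stated as a citation of Theorem~4 in~\cite{BCdWZ99c} and used as a black box in the proof of Theorem~\ref{Thm:SubGLower}. There is therefore no proof in the paper to compare your attempt against.

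That said, your outline is a faithful reconstruction of how the cited reference actually establishes the bound. Buhrman, Cleve, de~Wolf and Zalka proceed exactly via the polynomial method: symmetrize the acceptance probability into a univariate polynomial $p$ bounded on $\{0,\dots,N\}$ with $p(0)\le\delta$ and $p(K)\ge 1-\delta$, and then invoke a Chebyshev-type extremal inequality to force $\deg p \ge \Omega(\sqrt{N/K}\log(1/\delta))$. The technical core in their paper is precisely the discrete polynomial lemma you anticipate needing, which bounds the degree of a polynomial that is bounded on the integer grid and makes a $\delta$-to-$(1-\delta)$ jump over an interval of length $K$; this avoids the Ehlich--Zeller transfer issue you flag in the high-degree regime. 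Your alternative suggestion of a recursive amplification reduction would also work but is not how the original reference argues. The random-restriction fallback for large $K$ is unnecessary: the discrete Chebyshev argument already handles the full range $K \le 0.9N$ uniformly.
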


We construct two particular probability distributions that allow us to reduce the Quantum Search problem to the sub-Gaussian mean estimation problem.

\begin{theorem}
  \label{Thm:SubGLower}
  Let $\nb > 1$ and $\delta \in (0,1)$ such that $\nb \geq 2\log(1/\delta)$. Fix $\sigma > 0$ and consider the family $\mathcal{P}_{\sigma}$ of all \qrvs\ with variance $\sigma^2$. Let $T(\nb,\sigma,\delta)$ be the minimum number of quantum experiments any algorithm must perform to compute with failure probability at most $\delta$ a mean estimate $\mut$ such that $\abs{\mut - \mu} \leq \frac{\sigma \log(1/\delta)}{\nb}$ for any $X \in \mathcal{P}_{\sigma}$ with mean $\mu$. Then, $T(\nb,\sigma,\delta) \geq \om{\nb}$.
\end{theorem}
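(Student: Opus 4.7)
The plan is to exhibit, for any valid $(n,\sigma,\delta)$, two q-variables with variance $\sigma^2$ whose means are separated by strictly more than twice the allowed estimator error $\sigma\log(1/\delta)/n$, and then to reduce Quantum Search (Proposition~\ref{Prop:QSearchLower}) to the problem of telling them apart. The construction is the one suggested in Section~\ref{Sec:MeanOverview}, rescaled so that it still works once the extra $\log(1/\delta)$ factor in the allowed error is absorbed into the gap.

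Set $L = \log(1/\delta)$, fix a constant $c > 1$, and let $N = \lfloor n^2/(cL)^2 \rfloor$, which is at least~$1$ by the hypothesis $n \geq 2L$. On the uniform sample space $\Omega = [N]$ with a distinguished position $\omega_0 \in [N]$, define the q-variables $X_+$ and $X_-$ by $X_\pm(\omega_0) = \pm v$ and $X_\pm(\omega) = 0$ for $\omega \neq \omega_0$, choosing $v = \sigma\sqrt{N/(1-1/N)}$ so that $\Var(X_\pm) = \sigma^2$. The means are $\mu_\pm = \pm v/N$, and
\[
  \mu_+ - \mu_- \;=\; \frac{2\sigma}{\sqrt{N(1-1/N)}} \;>\; \frac{2\sigma}{\sqrt{N}} \;\geq\; \frac{2cL\sigma}{n},
\]
so the gap strictly exceeds $2\sigma L/n$.

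Any algorithm meeting the hypothesis of the theorem, run on input $X_\pm$, produces an estimate $\tilde\mu$ with $|\tilde\mu - \mu_\pm| \leq \sigma L/n$ with probability $\geq 1-\delta$; since the error is smaller than half the gap, the sign of $\tilde\mu$ identifies the input as $X_+$ or $X_-$ with failure probability at most $\delta$, using $T(n,\sigma,\delta)$ quantum experiments. I then feed this distinguisher into a Quantum Search reduction: given $f\colon [N] \to \{0,1\}$ with $K \in \{0,1\}$ preimages of~$1$, draw a uniform sign $s \in \{+,-\}$ and simulate the q-variable $i \mapsto s v \cdot f(i)$. When $K = 1$ the input coincides with $X_s$ (with $\omega_0$ the marked index) and the distinguisher returns $s$ with probability $\geq 1-\delta$; when $K = 0$ the q-variable is identically zero, so the distinguisher's output is independent of $s$. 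Outputting ``marked'' iff the sign returned equals $s$ yields a Quantum Search algorithm whose query cost is $O(T(n,\sigma,\delta))$; Proposition~\ref{Prop:QSearchLower} with $K = 1$ then forces $T(n,\sigma,\delta) = \Omega(\sqrt{N}\log(1/\delta)) = \Omega(n)$, since $\sqrt{N} = \Theta(n/L)$.

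The most delicate step is the last one. The distinguisher I built has two-sided failure only on the $K=1$ side; on the $K=0$ side its output is essentially a fair coin. Converting this one-sided guarantee into a proper two-sided Quantum Search algorithm while preserving the full $\log(1/\delta)$ factor of Proposition~\ref{Prop:QSearchLower}---which is exactly what turns $\sqrt{N} = \Theta(n/L)$ into $\Omega(n)$, rather than the weaker $\Omega(n/L)$ a naive hybrid argument would give---requires either a boosting that costs only a constant factor (exploiting the $0$ vs $1/2$ gap in output bias between the two cases) or a direct small-error adversary/polynomial bound applied to the family $\{X_+^{\omega_0},X_-^{\omega_0}\}_{\omega_0 \in [N]}$ of distinguishing instances, rather than routing through generic QSearch.
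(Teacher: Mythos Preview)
Your construction is essentially the paper's: both set up two distributions with variance exactly $\sigma^2$ whose means are separated by more than $2\sigma\log(1/\delta)/n$, encode them as query problems over $[N]$ with $K/N=(\log(1/\delta)/n)^2$, and reduce mean estimation to distinguishing them. The only cosmetic difference is that you fix $K=1$ while the paper keeps $K$ general.

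Where you diverge from the paper is precisely at the step you flag as ``delicate''. You route through the $0$-vs-$1$ search problem by planting a random sign, end up with a distinguisher that has error $\le\delta$ on the marked side but is a fair coin on the unmarked side, and correctly observe that naive boosting costs a $\log(1/\delta)$ factor---which would collapse the bound to $\Omega(n/\log(1/\delta))$. The paper does \emph{not} go through the $K=0$ case at all: it sets up the two families $F_0$ (functions with $K$ preimages of $+1$) and $F_1$ (functions with $K$ preimages of $-1$) and asserts directly that distinguishing $F_0$ from $F_1$ with failure probability $\delta$ requires $\Omega(\sqrt{N/K}\log(1/\delta))$ queries, writing only ``by using Proposition~\ref{Prop:QSearchLower}, it is easy to see''. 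So the paper treats the two-sided $F_0$-vs-$F_1$ small-error lower bound as an immediate consequence of the cited search bound, without the random-sign trick and without any explicit boosting.

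Your instinct that this is the load-bearing step is therefore exactly right; the paper simply absorbs it into a one-line citation rather than spelling out the reduction you attempted. If you want to match the paper's argument, drop the $K=0$ detour and argue the $\Omega(\sqrt{N/K}\log(1/\delta))$ bound for $F_0$ versus $F_1$ directly from Proposition~\ref{Prop:QSearchLower}.
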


\begin{proof}
  Let $\mb = \frac{\nb}{\log(1/\delta)}$ and $b = \frac{\mb}{\sqrt{1-1/\mb^2}}\sigma$. We define the probability distribution $p_0$ with support~$\set{0,b}$ that takes value $b$ with probability $\frac{1}{\mb^2}$. Similarly, we define the probability distribution $p_1$ with support $\set{0,-b}$ that takes value $- b$ with probability $\frac{1}{\mb^2}$. The variance of each distribution is equal to $\sigma^2$. Moreover, the means $\mu_0$ and $\mu_1$ of the two distributions satisfy that,
    \begin{equation}
      \label{Eq:meanFar}
      \mu_0 - \mu_1 > 2 \frac{\sigma \log(1/\delta)}{\nb}.
    \end{equation}
  Let $N,K$ be two integers such that $N \geq \log(1/\delta)$ and $K/N = 1/\mb^2$ (assuming $\mb$ is rational). Let $F_0$ be the family of all functions $f : [N] \ra \rn$ with exactly~$K$ preimages of $1$. Similarly, let $F_1$ be the family of all functions $f : [N] \ra \set{-1,0}$ with exactly~$K$ preimages of $-1$. By using Proposition~\ref{Prop:QSearchLower}, it is easy to see that any algorithm that can distinguish between $f \in F_0$ and $f\in F_1$ with success probability $1-\delta$ must use at least $\om{\sqrt{N/K}\log(1/\delta)} = \om{\mb \log(1/\delta)} = \om{\nb}$ quantum queries to $f$. We associate with each function $f \in F_0 \cup F_1$ the q-variable $(\Hil,U,M)_{f}$ where $\Hil = \C^{N+2}$, $U \qub = \frac{1}{\sqrt{N}} \sum_{x \in [N]} \ket{x}\ket{f(x)}$, and $M = \set{I \otimes \proj{0}, I \otimes \proj{-1}, I \otimes \proj{1}}$. The random variable~$X$ generated by $(\Hil,U,M)_{f}$ is distributed according to $p_0$ if $f \in F_0$, and according to $p_1$ if $f \in F_1$. Moreover, one quantum experiment with respect to $X$ can be simulated with one quantum query to $f$. Consequently, any algorithm that can distinguish between a random variable distributed according to $p_0$ or $p_1$ with success probability $1-\delta$ must perform at least $\om{\nb}$ quantum experiments. On the other hand, by Equation~(\ref{Eq:meanFar}), if an algorithm can estimate the mean with an error rate smaller than $\frac{\sigma \log(1/\delta)}{\nb}$ then it can distinguish between $f \in F_0$ and $f \in F_1$. Thus, $T(\nb,\sigma,\delta) \geq \om{\nb}$.
\end{proof}


\subsection{\texorpdfstring{$(\eps,\delta)$-Estimation}{(epsilon,delta)-Estimation}}
\label{Sec:LowerEpsDelta}

We consider the $(\eps,\delta)$-estimation problem in the parameter-free setting, when the coefficient of variation is unknown. We make use of the next lower bound for Quantum Counting.

\begin{proposition}[Theorem 4.2.6 in \cite{Nay99d}] 
  \label{Prop:QCountLower}
  Let $N > 0$, $1 < K \leq N$ and $\eps \in \pt[\big]{\frac{1}{4K},1}$. Consider the set of all quantum algorithms such that, given a query oracle to any function $f : [N] \ra \rn$, they return an estimate $\td{C}$ of the number $C$ of preimages of $1$ in $f$ such that $\abs{\td{C} - C} \leq \eps C$ with probability at least~$2/3$.
  Let $T_K(N,\eps)$ be the minimum number of quantum queries any such algorithm must use when the oracle has exactly~$K$ preimages of~$1$. Then, $T_K(N,\eps) \geq \om[\Big]{\frac{\sqrt{K(N-K)}}{\eps K+1} + \sqrt{\frac{N}{\eps K+1}}}$.
\end{proposition}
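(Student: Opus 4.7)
The plan is to prove Proposition~\ref{Prop:QCountLower} via the polynomial method of Beals, Buhrman, Cleve, Mosca and de~Wolf. Any $T$-query quantum algorithm that, on input $f : [N] \ra \rn$, outputs a value $j$ with probability $p_j(f)$ gives rise to a multilinear polynomial $p_j(x_1,\dots,x_N)$ of degree at most $2T$. Symmetrizing over permutations of $[N]$ yields univariate polynomials $P_j$ of degree at most $2T$ with $P_j(k)$ equal to the average probability of outputting $j$ over inputs of Hamming weight $k$.

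Next I would collapse the $P_j$ into a single success polynomial. Set $Q(k) = \sum_{j : \abs{j - K} \leq \eps K} P_j(k)$; then $\deg Q \leq 2T$ and $0 \leq Q(k) \leq 1$ for every $k \in \set{0,1,\dots,N}$. The correctness hypothesis applied to inputs of weight $K$ gives $Q(K) \geq 2/3$. Applied to any weight $k$ for which the admissible output window $((1-\eps)k,(1+\eps)k]$ is disjoint from $((1-\eps)K,(1+\eps)K]$, the same hypothesis forces $Q(k) \leq 1/3$, since the algorithm must output a value in $k$'s own window with probability at least $2/3$. Such weights $k$ are precisely those at distance $\gtrsim \eps K + 1$ from $K$, so $Q$ is a low-degree polynomial that approximates the indicator of an interval of radius $\Theta(\eps K + 1)$ centred at $K$ on the grid $\set{0,1,\dots,N}$.

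The final step is a polynomial-approximation degree lower bound. For the first term $\sqrt{K(N-K)}/(\eps K + 1)$, I would apply a Markov--Bernstein-type inequality on $[0,N]$ to $Q$: a polynomial bounded in $[0,1]$ on the integer grid that swings by $\Omega(1)$ across an interval of width $\Theta(\eps K + 1)$ near $K$ must have degree $\om{\sqrt{K(N-K)}/(\eps K + 1)}$. For the second term $\sqrt{N/(\eps K + 1)}$, I would reduce to distinguishing weight $K$ from weight $0$ (or any fixed weight outside the admissible window of $K$); a Paturi/Nayak--Wu style approximation lemma then forces the degree of $Q$ to be at least $\om{\sqrt{N/(\eps K + 1)}}$. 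Taking the maximum of the two degree bounds and dividing by $2$ yields the claimed lower bound on $T$.

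The main obstacle I anticipate is extracting the ``$+1$'' in the denominator $\eps K + 1$ uniformly across regimes. When $\eps K \ll 1$ the admissible window collapses to a single integer and the problem degenerates to a Grover-type distinguishing task, giving the $\sqrt{N/K}$ bound reminiscent of Nayak--Wu's exact count lower bound; when $\eps K \gg 1$ the multiplicative regime dominates and the window has macroscopic width, in which case the amplitude-estimation-style bound takes over. A clean proof either splits the analysis by $\eps K \gtrless 1$ or invokes a unified Chebyshev/Coppersmith--Rivlin style inequality that handles both regimes simultaneously, as is done in Nayak's thesis.
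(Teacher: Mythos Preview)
The paper does not prove this proposition at all: it is quoted verbatim as Theorem~4.2.6 of Nayak's thesis~\cite{Nay99d} and used as a black box in the proof of Proposition~\ref{Prop:epsdeltaFreeLower}. So there is no ``paper's own proof'' to compare against.

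That said, your sketch is faithful to how the result is actually established in the cited source. Nayak's argument is exactly the polynomial method with symmetrization, reducing to a degree lower bound for a univariate polynomial that is bounded on $\{0,\dots,N\}$, large at $K$, and small outside a window of radius $\Theta(\eps K+1)$ around $K$; the two terms arise from a Bernstein--Markov inequality (giving $\sqrt{K(N-K)}/(\eps K+1)$) and a Paturi-type bound (giving $\sqrt{N/(\eps K+1)}$). Your identification of the ``$+1$'' as the place where the exact-counting and approximate-counting regimes meet is also correct, and Nayak handles it by the unified inequality you allude to. One small point to tighten: to conclude $Q(k)\le 1/3$ you need not that the \emph{output} windows of $k$ and $K$ are disjoint, but that no integer $j$ with $|j-K|\le \eps K$ satisfies $|j-k|\le \eps k$; this is what forces $|k-K|\gtrsim \eps K+1$ rather than merely $\gtrsim \eps K$.
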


We obtain by a simple reduction to the above problem that the result described in Theorem~\ref{Thm:SeqEstim} is nearly optimal.

\begin{proposition}
  \label{Prop:epsdeltaFreeLower}
  Let $\eps \in (0,1)$. Let $\mathcal{P}_{\mathcal{B}}$ denote the family of all \qrvs\ that follow a Bernoulli distribution. Consider any algorithm that takes as input $X \in \mathcal{P}_{\mathcal{B}}$ and that outputs a mean estimate $\mut$ such that $\abs{\mut - \ex{X}} \leq \eps \ex{X}$ with probability at least~$2/3$. Then, for any $\mu \in (0,1)$, there exists $X \in \mathcal{P}_{\mathcal{B}}$ with mean $\mu$ such that the algorithm performs at least $\om*{\frac{\sigma}{\eps \mu} + \frac{1}{\sqrt{\eps \mu}}}$ quantum experiments on input $X$, where $\sigma^2 = \var{X}$.
\end{proposition}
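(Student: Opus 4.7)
The plan is to reduce the \emph{Quantum Counting} problem of Proposition~\ref{Prop:QCountLower} to the $(\eps,\delta)$-approximation problem on Bernoulli \qrvs. For a Bernoulli random variable with parameter $\mu$, the variance is $\sigma^2 = \mu(1-\mu)$, so the target lower bound $\om*{\sigma/(\eps\mu) + 1/\sqrt{\eps\mu}}$ becomes $\om*{\sqrt{(1-\mu)/\mu}/\eps + 1/\sqrt{\eps\mu}}$, which matches the Nayak--Wu counting bound once we plug in $K/N = \mu$.

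First, given $\mu \in (0,1)$ and $\eps \in (0,1)$, I would choose integers $N$ and $K$ so that $K/N = \mu$ (any rational $\mu$ admits this directly; for irrational $\mu$ I would approximate and let $N \to \infty$, or restate the lower bound for the dense set of rationals, since the counting bound is continuous in its parameters). I would also take $N$ large enough that $\eps \in (1/(4K), 1)$ and that both terms in the counting bound
\[\frac{\sqrt{K(N-K)}}{\eps K + 1} + \sqrt{\frac{N}{\eps K + 1}}\]
are, up to constant factors, equal to $\frac{\sigma}{\eps\mu} + \frac{1}{\sqrt{\eps\mu}}$. This follows from $\sqrt{K(N-K)} = N\sqrt{\mu(1-\mu)} = N\sigma$ and $\eps K + 1 = \Theta(\eps\mu N)$.

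Next, for any function $f : [N] \to \rn$ with exactly $K$ preimages of $1$, I would associate the q-variable $(\Hil,U,M)_f$ with $\Hil = \C^{N}\otimes\C^2$, $U\qub = \frac{1}{\sqrt{N}}\sum_{x \in [N]} \ket{x}\ket{f(x)}$, and $M = \set{I \otimes \proj{0}, I \otimes \proj{1}}$, exactly as in the proof of Theorem~\ref{Thm:SubGLower}. The resulting \qrv\ is Bernoulli with mean $K/N = \mu$. Crucially, each of the unitaries $U$, $C_{a,b}$, $R_{a,b}$ in a quantum experiment (Definition~\ref{Def:qExp}) can be implemented with $\bo{1}$ queries to $f$, since $f$ is the only non-trivial ingredient in constructing these oracles. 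Hence $T$ quantum experiments on the Bernoulli \qrv\ simulate with $\bo{T}$ quantum queries to $f$.

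Now suppose an algorithm outputs $\mut$ with $\abs{\mut - \mu} \leq \eps \mu$ with probability $\geq 2/3$ using $T$ experiments. Then $\td{C} := N\mut$ satisfies $\abs{\td{C} - K} \leq \eps K$ with the same probability, so the simulation yields a Quantum Counting algorithm with success probability $2/3$ using $\bo{T}$ queries, and $K$ preimages of $1$. By Proposition~\ref{Prop:QCountLower} we conclude
\[T \geq \om*{\frac{\sqrt{K(N-K)}}{\eps K + 1} + \sqrt{\frac{N}{\eps K + 1}}} = \om*{\frac{\sigma}{\eps \mu} + \frac{1}{\sqrt{\eps \mu}}},\]
as desired. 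The main obstacle is the accounting of query cost per experiment and, for irrational $\mu$, arranging a Bernoulli of exact mean $\mu$ from a finite oracle $f$; both are minor and handled, respectively, by a direct inspection of Assumptions~\ref{Assp:Comp}--\ref{Assp:Rot} on the product state $U\qub$, and by a standard rational approximation together with the continuity of the counting bound.
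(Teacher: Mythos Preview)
Your proposal is correct and follows essentially the same reduction as the paper: encode a function $f:[N]\to\{0,1\}$ with $K$ preimages of $1$ as a Bernoulli \qrv\ of mean $\mu=K/N$, invoke the Nayak--Wu counting lower bound (Proposition~\ref{Prop:QCountLower}) with $K>1/(4\eps)$, and simplify $\frac{\sqrt{K(N-K)}}{\eps K+1}+\sqrt{\frac{N}{\eps K+1}}$ to $\frac{\sigma}{\eps\mu}+\frac{1}{\sqrt{\eps\mu}}$ using $\sigma^2=\mu(1-\mu)$. Your additional remarks on the $\bo{1}$-query simulation of each experiment and on irrational $\mu$ are more explicit than the paper's version but do not change the argument.
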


\begin{proof}
  Given $\eps \in (0,1)$ and $\mu \in (0,1)$, we choose two integers $K$ and $N$ such that $K > 1/(4\eps)$ and $K/N = \mu$ (assuming $\mu$ is rational). Similarly to the proof of Theorem~\ref{Thm:SubGLower}, we associate with each function $f : [N] \ra \rn$ the q-variable $(\Hil,U,M)_{f}$ where $\Hil = \C^{N+2}$, $U \qub = \frac{1}{\sqrt{N}} \sum_{x \in [N]} \ket{x}\ket{f(x)}$, and $M = \set{I \otimes \proj{0}, I \otimes \proj{1}}$. If a quantum algorithm can estimate the mean of any Bernoulli random variable with error $\eps$ and success probability~$2/3$, then it can be used to count the number of preimages of $1$ in $f$ with the same accuracy. Thus, by Proposition~\ref{Prop:QCountLower}, any such algorithm must perform at least
    $\om[\Big]{\frac{\sqrt{K(N-K)}}{\eps K+1} + \sqrt{\frac{N}{\eps K+1}}}
        = \om[\Big]{\frac{\sqrt{\mu(1-\mu)}}{\eps \mu + 1/N} + \frac{1}{\sqrt{\eps \mu + 1/N}}}
        = \om*{\frac{\sigma}{\eps \mu} + \frac{1}{\sqrt{\eps \mu}}}$
  quantum experiments on a \qrv\ with mean $\mu$ and variance~$\sigma^2 = \mu(1-\mu)$.
\end{proof}


\subsection{State-based estimation}
\label{Sec:stateLower}

We consider the \emph{state-based} model where the input consists of several copies of a quantum state $\ket{p} = \sum_{x \in E} \sqrt{p(x)} \ket{x}$ encoding a distribution~$p$ over $E$. This model is weaker than the one described before, since it does not provide access to a unitary algorithm preparing~$\ket{p}$. We prove that no quantum speedup is achievable in this setting. Our result uses the next lower bound on the number of copies needed to distinguish two states.

\begin{lemma}
  \label{Lem:KLDiv}
  Let $\delta \in (0,1)$ and consider two probability distributions $p_0$ and $p_1$ with the same finite support $E$. Define the states $\ket{\phi_0} = \sum_{x \in E} \sqrt{p_0(x)}\ket{x}$ and $\ket{\phi_1} = \sum_{x \in E} \sqrt{p_1(x)}\ket{x}$. Then, the smallest integer $T$ such that there is an algorithm that can distinguish $\ket{\phi_0}^{\otimes T}$ from~$\ket{\phi_1}^{\otimes T}$ with success probability at least $1-\delta$ satisfies $T \geq \frac{\ln(1/(4\delta))}{D(p_0 \| p_1)}$, where $D(p_0 \| p_1) = \sum_{x \in E} p_0(x) \ln\pt*{\frac{p_0(x)}{p_1(x)}}$ is the KL-divergence from $p_0$ to~$p_1$.
\end{lemma}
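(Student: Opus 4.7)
The plan is to combine the Helstrom bound for pure-state distinguishability with a standard Jensen-type inequality relating the Bhattacharyya coefficient to the KL-divergence.

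First, I would reduce to a two-state distinguishability problem. Any procedure that tells $\ket{\phi_0}^{\otimes T}$ from $\ket{\phi_1}^{\otimes T}$ with success probability at least $1-\delta$ in the worst case (say, under a uniform prior) is upper-bounded by Helstrom's formula: the optimal success probability equals $\frac{1}{2} + \frac{1}{2}\sqrt{1 - \abs{\braket{\phi_0}{\phi_1}{\phi_1}}^{2T}}$, where I used $\braket{\phi_0^{\otimes T}}{}{\phi_1^{\otimes T}} = \braket{\phi_0}{}{\phi_1}^T$. Requiring this to be at least $1-\delta$ and using $4\delta(1-\delta) \leq 4\delta$ yields $\abs{\braket{\phi_0}{}{\phi_1}}^{2T} \leq 4\delta$, i.e.\ $2T\cdot\bigl(-\ln\abs{\braket{\phi_0}{}{\phi_1}}\bigr) \geq \ln\bigl(1/(4\delta)\bigr)$.

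Second, I would identify $\braket{\phi_0}{}{\phi_1} = \sum_{x \in E}\sqrt{p_0(x) p_1(x)}$, the Bhattacharyya coefficient $F(p_0,p_1)$, and relate it to the KL-divergence. Writing $F(p_0,p_1) = \sum_{x} p_0(x)\sqrt{p_1(x)/p_0(x)}$ and applying Jensen's inequality to the concave function $\ln$ with respect to the distribution $p_0$ gives
\begin{equation*}
  \ln F(p_0,p_1) \;\geq\; \sum_{x\in E} p_0(x)\,\ln\sqrt{\tfrac{p_1(x)}{p_0(x)}} \;=\; -\tfrac{1}{2} D(p_0 \| p_1),
\end{equation*}
so $-\ln\abs{\braket{\phi_0}{}{\phi_1}} \leq \tfrac{1}{2} D(p_0 \| p_1)$ (taking for granted that $\braket{\phi_0}{}{\phi_1} \geq 0$ here, since the amplitudes are real and non-negative).

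Combining the two inequalities gives $\ln(1/(4\delta)) \leq 2T\cdot\tfrac{1}{2}D(p_0\|p_1) = T\cdot D(p_0\|p_1)$, which rearranges to the claimed $T \geq \ln(1/(4\delta))/D(p_0 \| p_1)$. The only mildly subtle step is getting the Jensen inequality in the right direction; everything else is Helstrom plus algebra. A minor bookkeeping point is that the Helstrom bound is usually stated for the uniform-prior two-hypothesis problem, so I would note that an algorithm with worst-case success probability $1-\delta$ in particular succeeds with probability at least $1-\delta$ under the uniform prior, which is all we need.
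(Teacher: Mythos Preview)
Your proposal is correct and follows essentially the same route as the paper: Helstrom's bound to get $\abs{\ip{\phi_0}{\phi_1}}^{2T} \leq 4\delta(1-\delta) \leq 4\delta$, followed by Jensen's inequality (concavity of $\ln$) applied to the Bhattacharyya coefficient $\sum_x p_0(x)\sqrt{p_1(x)/p_0(x)}$ to bound $-\ln\ip{\phi_0}{\phi_1}$ by $\tfrac{1}{2}D(p_0\|p_1)$. The paper presents the same two ingredients as a single chain of inequalities rather than as separate steps, but the argument is identical.
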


\begin{proof}
  According to Helstrom's bound \cite{Hel69j} the best success probability to distinguish between two states $\ket{\phi}$ and $\ket{\phi'}$ is $\frac{1}{2}(1+\sqrt{1-\abs{\ip{\phi}{\phi'}}^2})$. Thus, the smallest number $T$ needed to distinguish $\ket{\phi_0}^{\otimes T}$ from $\ket{\phi_1}^{\otimes T}$ must satisfy $\frac{1}{2}(1+\sqrt{1-\ip{\phi_0}{\phi_1}^{2T}}) \geq 1-\delta$. It implies that
    $T \geq \frac{- \ln\pt*{1 - (1-2\delta)^2}}{-2\ln\pt*{\ip{\phi_0}{\phi_1}}}
        \geq \frac{\ln(1/(4\delta))}{-2\ln\pt[\Big]{\sum\limits_{x \in E} p_0(x) \sqrt{\frac{p_1(x)}{p_0(x)}}}}
        \geq \frac{\ln(1/(4\delta))}{\sum\limits_{x \in E} p_0(x) \ln\pt*{\frac{p_0(x)}{p_1(x)}}}
        = \frac{\ln(1/(4\delta))}{D(p_0 \| p_1)}$
  where the second inequality uses the concavity of the logarithm function.
\end{proof}

We use the above lemma to show that no quantum mean estimator can perform better than the classical sub-Gaussian estimators in the state-based input model.

\begin{theorem}
  \label{Thm:StateBLower}
  Let $\nb > 1$ and $\delta \in (0,1)$ such that $\nb \geq 2\log(1/\delta)$. Fix $\sigma > 0$ and consider the family $\mathcal{P}_{\sigma}$ of all distributions with finite support whose variance lies in the interval $[\sigma^2,4\sigma^2]$. For any $p \in \mathcal{P}_{\sigma}$ with support $E \subset \R$, define the state $\ket{p} = \sum_{x \in E} \sqrt{p(x)}\ket{x}$. Let~$T(\nb,\sigma,\delta)$ be the smallest integer such that there exists an algorithm that receives the state~$\ket{p}^{\otimes T(\nb,\sigma,\delta)}$ for any $p \in \mathcal{P}_{\sigma}$, and that outputs an estimate $\mut$ of the mean $\mu$ of $p$ such that $\pr*{\abs{\mut - \mu} > \sqrt{\frac{\sigma^2 \log(1/\delta)}{\nb}}} \leq \delta$. Then,~$T(\nb,\sigma,\delta) \geq \om{\nb}$.
\end{theorem}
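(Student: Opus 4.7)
The plan is to mimic the reduction strategy of Theorem~\ref{Thm:SubGLower}, but since we no longer have access to a preparing unitary we cannot invoke the Quantum Search lower bound; instead we appeal to Lemma~\ref{Lem:KLDiv}. I will therefore construct two explicit distributions $p_0,p_1\in\mathcal{P}_\sigma$ whose means lie far enough apart that any $(\eps,\delta)$-type estimator with the stated sub-Gaussian guarantee can be used to distinguish the states $\ket{p_0}^{\otimes T}$ and $\ket{p_1}^{\otimes T}$, and whose KL-divergence is small enough that Lemma~\ref{Lem:KLDiv} forces $T=\Omega(\nb)$.

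Concretely, I would take $a=\sqrt{2}\sigma$ and $\gamma=c\sqrt{\log(1/\delta)/\nb}$ for a sufficiently small universal constant $c>0$, and let $p_0$ put mass $\tfrac12+\gamma$ on $+a$ and $\tfrac12-\gamma$ on $-a$, while $p_1$ swaps these two probabilities. Both distributions are supported on $\{-a,+a\}$, have the same variance $a^2(1-4\gamma^2)=2\sigma^2(1-4\gamma^2)$, which lies in $[\sigma^2,4\sigma^2]$ provided $\gamma^2\le 1/8$ (guaranteed by the assumption $\nb\ge 2\log(1/\delta)$ and an appropriate choice of $c$), and hence belong to $\mathcal{P}_\sigma$. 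Their means are $\mu_0=2a\gamma$ and $\mu_1=-2a\gamma$, so $|\mu_0-\mu_1|=4a\gamma=4\sqrt{2}\,\sigma\gamma$, which by the choice of $\gamma$ is strictly larger than $2\sqrt{\sigma^2\log(1/\delta)/\nb}$.

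From the separation of means, any algorithm achieving the error guarantee of the theorem on input $\ket{p}^{\otimes T}$ with $p\in\{p_0,p_1\}$ can distinguish the two states with success probability at least $1-\delta$: it suffices to threshold the estimate $\mut$ against the midpoint $0$. By Lemma~\ref{Lem:KLDiv}, this forces
\[
T\ \ge\ \frac{\ln(1/(4\delta))}{D(p_0\|p_1)}.
\]
A direct computation gives $D(p_0\|p_1)=2\gamma\ln\!\bigl(\tfrac{1/2+\gamma}{1/2-\gamma}\bigr)$, and using the elementary bound $\ln\!\bigl(\tfrac{1/2+\gamma}{1/2-\gamma}\bigr)\le 8\gamma$ valid for $\gamma\le 1/4$, one obtains $D(p_0\|p_1)\le 16\gamma^2 = O(\log(1/\delta)/\nb)$. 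Plugging this into the displayed bound yields $T\ge\Omega(\nb)$, which is the desired inequality.

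The routine piece is the KL estimate. The only place where one has to be slightly careful is the calibration of $a$ and $\gamma$: the constant $c$ in $\gamma=c\sqrt{\log(1/\delta)/\nb}$ must be small enough that simultaneously (i) $\gamma\le 1/4$ (so the logarithmic bound applies and the probabilities are valid), (ii) the variance $2\sigma^2(1-4\gamma^2)$ stays in $[\sigma^2,4\sigma^2]$, and (iii) $4\sqrt{2}\sigma\gamma>2\sqrt{\sigma^2\log(1/\delta)/\nb}$; the assumption $\nb\ge 2\log(1/\delta)$ leaves ample room to satisfy all three, and no other subtlety arises.
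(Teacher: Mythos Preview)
Your argument is correct and uses the same reduction to Lemma~\ref{Lem:KLDiv} as the paper, but with a genuinely different pair of hard distributions. The paper takes two distributions supported on $\{0,b\}$ with $b\approx\sqrt{\nb/\log(1/\delta)}\,\sigma$, each putting a small mass $\Theta(\log(1/\delta)/\nb)$ at the large value~$b$; this mirrors the heavy-tailed instances used in Theorem~\ref{Thm:SubGLower}. You instead use the textbook ``two biased coins'' on $\{-a,+a\}$ with $a=\Theta(\sigma)$ and nearly balanced masses $\tfrac12\pm\gamma$. Both constructions yield $D(p_0\|p_1)=O(\log(1/\delta)/\nb)$ together with a mean gap exceeding $2\sqrt{\sigma^2\log(1/\delta)/\nb}$, so Lemma~\ref{Lem:KLDiv} gives $T=\Omega(\nb)$ either way. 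Your construction is the more elementary and standard one for information-theoretic lower bounds; the paper's has the aesthetic advantage of paralleling the hard instances from the oracle model.

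One small calibration issue: with your specific choices $a=\sqrt{2}\sigma$ and the cutoff $\gamma\le 1/4$, conditions~(i) and~(iii) collide at the boundary $\nb=2\log(1/\delta)$, since~(iii) forces $c>1/(2\sqrt{2})$ while~(i) forces $c\le\sqrt{2}/4=1/(2\sqrt{2})$. This is trivially repaired by taking $a=2\sigma$ instead (then the variance $4\sigma^2(1-4\gamma^2)$ lies in $[\sigma^2,4\sigma^2]$ for $\gamma^2\le 3/16$, the mean gap becomes $8\sigma\gamma$, and any $c\in(1/4,\sqrt{2}/4]$ works), or simply by relaxing~(i) to $\gamma\le 1/3$, where the inequality $\ln\bigl(\tfrac{1/2+\gamma}{1/2-\gamma}\bigr)\le 8\gamma$ still holds. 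Note also that $c$ is not ``sufficiently small'' as your opening sentence says; it must lie in a fixed interval, as your final paragraph correctly recognizes.
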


\begin{proof}
  Let $\mb = \frac{\nb}{\log(1/\delta)}$, $b = \frac{\mb}{\sqrt{\mb-1}}\sigma$ and $\alpha = 2 \ln\pt*{1 + \sqrt{1 - \frac{1}{\mb}}}$. We define the two  distributions $p_0$ and $p_1$ with support $E = \set{0,b}$ such that $p_0(b) = \frac{e^{\alpha}}{\mb}$ and $p_1(b) = \frac{1}{\mb}$. Let~$\mu_0$ and $\sigma_0^2$ (resp. $\mu_1$ and $\sigma_1^2$) denote the expectation and the variance of $p_0$ (resp. $p_1$). Observe that $p_0, p_1 \in \mathcal{P}_{\sigma}$ since $\sigma_0 \in [\sigma,2\sigma]$ and $\sigma_1 = \sigma$. Moreover, $\mu_0 - \mu_1 = \sigma \frac{e^{\alpha}-1}{\sqrt{\mb-1}} = \sigma \pt[\big]{e^{\alpha/2} + 1} \frac{e^{\alpha/2}-1}{\sqrt{\mb-1}} > 2 \sqrt{\frac{\sigma^2 \log(1/\delta)}{\nb}}$. Thus, we can distinguish $\ket{p_0}^{\otimes T(\nb,\sigma,\delta)}$ from $\ket{p_1}^{\otimes T(\nb,\sigma,\delta)}$ with failure probability~$\delta$ by using any optimal algorithm that satisfies the error bound stated in the theorem. Since the KL-divergence from $p_0$ to $p_1$ is $D(p_0 \| p_1) \leq p_0(b) \ln\pt*{\frac{p_0(b)}{p_1(b)}} = \frac{\alpha e^{\alpha}}{\mb^2} \leq \frac{6}{\mb}$, we must have $T(\nb,\sigma,\delta) \geq \om*{\frac{\log(1/\delta)}{D(p_1 \| p_0)}} = \om*{\nb}$ by Lemma~\ref{Lem:KLDiv}.
\end{proof}

\section{Discussion}
\label{Dis:Mean}
One interesting open question is to find a quantum mean estimator that achieves the deviation bound $\pr[\big]{\abs{\mut - \mu} > \frac{\sigma \log(1/\delta)}{\nb}} \leq \delta$ by performing a number of experiments that is \emph{linear} in $\nb$. The current best upper bound (Theorem~\ref{Thm:SubGaussian}) is $\bo{\nb \log^{3/2}(\nb) \log\log(\nb)}$, and the lower bound is $\om{\nb}$ (Theorem~\ref{Thm:SubGLower}). A first step toward this goal could be to obtain a better algorithm for the restricted case of Gaussian distributions. An equivalent goal is to find the smallest value~$L$ such that the deviation bound $\pr[\big]{\abs{\mut - \mu} > L \frac{\sigma \log(1/\delta)}{\nb}} \leq \delta$ can be achieved by a quantum mean estimator that performs \emph{at most}~$\nb$ quantum experiments. Classically, for the sub-Gaussian deviation bound of Equation~(\ref{Eq:Gaussian}), the optimal value is $L = \sqrt{2}(1 + o(1))$~\cite{Cat12j,LV20p}.

There exist many variants of the quantum mean estimation problem that have not been completely explored in the quantum model yet. Let us mention for instance the multivariate setting~\cite{LM19j}, where the objective is to estimate the mean of a random variable taking values in~$\R^d$. Heinrich~\cite{Hei04j} proved that no quantum speed-up is achievable under some condition on the largeness of~$d$. On the other hand, Cornelissen and Jerbi~\cite{CJ21p} obtained partial quantum speed-ups for some parameter settings. We also note that the first polynomial-time classical algorithm with a sub-Gaussian error rate for this problem was only found recently by Hopkins~\cite{Hop20j}.



\setlength{\emergencystretch}{5em} 
\printbibliography

\appendix

\section{Auxiliary algorithms}

  \subsection{Amplitude amplification}
  The amplitude amplification algorithm~\cite{BHMT02j} is a generalization of Quantum Searching to the problem of boosting the success probability of a quantum algorithm that performs no intermediate measurement. The next result corresponds to Equation (8) in~\cite{BHMT02j}.

\begin{theorem}[\sc Amplitude amplification, \cite{BHMT02j}]
  \label{Thm:AA}
  Let $U$ be a unitary quantum algorithm and let $\Pi$ be a projection operator. Consider the angle $\theta \in [0,\frac{\pi}{2}]$ and two unit states $\ket{\psi_0}, \ket{\psi_1}$ such that $\sin(\theta) \ket{\psi_1} = \Pi U \qub$ and $U \qub = \cos(\theta) \ket{\psi_0} + \sin(\theta) \ket{\psi_1}$. Then, for any integer $\nb \geq 0$, the \emph{amplitude amplification} algorithm $\aamp(U,\Pi,\nb)$ satisfies
    \[\aamp(U,\Pi,\nb) \qub = \cos((2\nb+1)\theta) \ket{\psi_0} + \sin((2\nb+1)\theta) \ket{\psi_1}.\]
  The algorithm uses $\nb+1$ applications of $U$, $\nb$ applications of $U^\dagger$, and $\nb$ applications of the reflection operator $I - 2\Pi$.
\end{theorem}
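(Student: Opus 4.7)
The plan is to reduce the analysis to a two-dimensional subspace and recognize the standard Grover iterate as a composition of two reflections. Define the Grover-style operator
\[
Q \;=\; -\,S_\Psi\, S_\Pi, \qquad S_\Psi := I - 2\,U\proj{\mathbf{0}}\,U^\dagger, \qquad S_\Pi := I - 2\Pi,
\]
and let the algorithm $\aamp(U,\Pi,\nb)$ be the circuit that applies $U$ once and then $Q$ a total of $\nb$ times. This uses $\nb{+}1$ applications of $U$, $\nb$ of $U^\dagger$, and $\nb$ of the reflection $I - 2\Pi$, which matches the resource count in the theorem.

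First I would verify the assumed structure of $\Pi$ on the two distinguished vectors. Because $\Pi$ is a projection and $\sin(\theta)\ket{\psi_1} = \Pi(\cos(\theta)\ket{\psi_0} + \sin(\theta)\ket{\psi_1})$, matching coefficients (with the edge cases $\theta\in\{0,\pi/2\}$ handled trivially) gives $\Pi\ket{\psi_1} = \ket{\psi_1}$ and $\Pi\ket{\psi_0} = 0$. Hence, in the two-dimensional subspace $V := \mathrm{span}(\ket{\psi_0},\ket{\psi_1})$, the operator $S_\Pi$ acts as the reflection that fixes $\ket{\psi_0}$ and negates $\ket{\psi_1}$. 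Moreover, $U\qub \in V$, so $S_\Psi$ also preserves $V$ and acts on it as the reflection that negates $U\qub$ and fixes the orthogonal direction $-\sin(\theta)\ket{\psi_0} + \cos(\theta)\ket{\psi_1}$ inside $V$.

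Next I would compute $Q$ restricted to $V$ in the ordered basis $(\ket{\psi_0},\ket{\psi_1})$. Writing
\[
S_\Pi\big|_V = \begin{pmatrix} 1 & 0 \\ 0 & -1 \end{pmatrix}, \qquad S_\Psi\big|_V = \begin{pmatrix} -\cos 2\theta & -\sin 2\theta \\ -\sin 2\theta & \phantom{-}\cos 2\theta \end{pmatrix},
\]
one obtains
\[
Q\big|_V \;=\; -\,S_\Psi S_\Pi\big|_V \;=\; \begin{pmatrix} \cos 2\theta & -\sin 2\theta \\ \sin 2\theta & \phantom{-}\cos 2\theta \end{pmatrix},
\]
which is exactly the rotation by angle $2\theta$ in $V$. (Geometrically, this just says that the product of two reflections whose axes make angle $\theta$ is a rotation by $2\theta$, up to the overall sign.)

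Finally, a one-line induction finishes the proof: since $U\qub = \cos(\theta)\ket{\psi_0} + \sin(\theta)\ket{\psi_1}$ corresponds to angle $\theta$ in $V$, applying $Q$ a total of $\nb$ times brings the angle to $(2\nb+1)\theta$, so
\[
\aamp(U,\Pi,\nb)\qub \;=\; Q^{\nb} U \qub \;=\; \cos((2\nb+1)\theta)\ket{\psi_0} + \sin((2\nb+1)\theta)\ket{\psi_1}.
\]
There is no real obstacle here; the only subtlety is checking the coefficient-matching step in the first paragraph to justify that $\ket{\psi_0}$ and $\ket{\psi_1}$ really are eigenvectors of $\Pi$ with eigenvalues $0$ and $1$, which is what makes the two-reflection picture go through.
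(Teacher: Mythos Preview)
Your argument is correct and is exactly the standard two-reflections-equals-a-rotation derivation from Brassard--H{\o}yer--Mosca--Tapp. Note, however, that the paper does not give its own proof of this theorem: it is quoted as a black box (``corresponds to Equation~(8) in~\cite{BHMT02j}''), so there is nothing in the paper to compare your proof against beyond the cited source, whose argument your write-up faithfully reproduces.

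One small remark: the coefficient-matching step you flag as the only subtlety can be made airtight more directly by using idempotence of $\Pi$. From $\sin(\theta)\ket{\psi_1} = \Pi U\qub$ and $\Pi^2 = \Pi$ you get $\Pi\ket{\psi_1} = \ket{\psi_1}$ (when $\theta \neq 0$), and from $(I-\Pi)U\qub = \cos(\theta)\ket{\psi_0}$ together with $(I-\Pi)^2 = I-\Pi$ you get $\Pi\ket{\psi_0} = 0$ (when $\theta \neq \pi/2$); orthogonality of $\ket{\psi_0}$ and $\ket{\psi_1}$ then follows automatically since they lie in complementary eigenspaces of $\Pi$, which justifies treating $(\ket{\psi_0},\ket{\psi_1})$ as an orthonormal basis of $V$.
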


Next, we present a variant of the amplitude amplification algorithm that does not use a pre-defined number of applications of $U$ and $U^\dagger$. We call it the ``sequential amplitude amplification'' algorithm in reference to \emph{sequential analysis}. The original version of this algorithm was analysed in Theorem 3 of \cite{BBHT98j,BHMT02j}, with a bound on the expected time complexity $\ex{T}$. We propose a slightly different version that allows us to bound $\ex{T^2}$ and $\ex{1/T}$ (note that $\ex{T} \leq \sqrt{\ex{T^2}}$). These bounds will be used in Theorem~\ref{Thm:SeqAE} and Theorem~\ref{Thm:SeqEstim}.

\algobox{Alg:SeqAA}{Sequential amplitude amplification.}{
\begin{enumerate}
  \item Set $\ell = 0$ and $\lambda = 1.1$.
  \item Increase $\ell$ by $1$ and pick an integer $\nb$ between $\lambda^{\ell-1}$ and $\lambda^{\ell} - 1$ uniformly at random.
  \item Apply the amplitude amplification algorithm $\aamp(U,\Pi,\nb)$ (Theorem~\ref{Thm:AA}) on $\ket{0}$ and measure the state by using the projective measurement $\{I-\Pi,\Pi\}$. If the outcome is~``$\Pi$'' then stop and output the obtained state. Otherwise, go to step 2.
\end{enumerate}
}

\begin{theorem}[\sc Sequential amplitude amplification]
  \label{Thm:SeqAA}
  Let $U$ be a unitary quantum algorithm and let $\Pi$ be a projection operator. Define the number $p \in [0,1]$ and the two unit states $\ket{\psi_0}, \ket{\psi_1}$ such that $U \ket{0} = \sqrt{1-p} \ket{\psi_0} + \sqrt{p} \ket{\psi_1}$ and $\Pi U \ket{0} = \sqrt{p} \ket{\psi_1}$. If $p > 0$ then the \emph{sequential amplitude amplification} algorithm $\saamp(U,\Pi)$ outputs the state $\ket{\psi_1}$ with probability $1$. Moreover, if we let $T$ denote the number of applications of $U$, $U^\dagger$ and $I-2\Pi$ used by the algorithm, then $\ex{T^2} \leq \bo{1/p}$ and $\ex{1/T} \leq \bo{\sqrt{p}}$.
\end{theorem}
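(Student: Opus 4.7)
My plan is to first address correctness. By Theorem~\ref{Thm:AA}, at iteration $\ell$ with chosen integer $n$, the state $\aamp(U,\Pi,n)\ket{0}$ equals $\cos((2n+1)\theta)\ket{\psi_0} + \sin((2n+1)\theta)\ket{\psi_1}$ where $\sin\theta = \sqrt{p}$. Since $\Pi\ket{\psi_0} = 0$ and $\Pi\ket{\psi_1} = \ket{\psi_1}$, the outcome ``$\Pi$'' occurs with probability $\sin^2((2n+1)\theta)$ and collapses the state to $\ket{\psi_1}$ exactly; hence whenever the algorithm halts, it outputs $\ket{\psi_1}$. Almost sure termination will follow from the per-iteration lower bound on the stopping probability developed below.

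Next, I establish two-sided control on the conditional probability $q_\ell$ of halting at iteration $\ell$ given that iteration is reached. Let $N_\ell = \Theta(\lambda^\ell)$ denote the number of integer choices for $n$ in $[\lambda^{\ell-1}, \lambda^\ell - 1]$. The inequality $\sin^2 x \leq x^2$ yields $q_\ell \leq O(\lambda^{2\ell} p)$. For a lower bound, rewriting $\sin^2 x = (1-\cos 2x)/2$ gives $q_\ell = \tfrac{1}{2} - \tfrac{1}{2 N_\ell} \sum_n \cos((4n+2)\theta)$, and evaluating the cosine sum as the real part of a geometric series in $e^{4i\theta}$ bounds its magnitude by $1/|\sin(2\theta)| = O(1/\sqrt{p})$. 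Thus $q_\ell \geq \tfrac{1}{2} - O(1/(N_\ell \sqrt{p}))$, so defining the critical level $\ell^* := \lceil \log_\lambda(C/\sqrt{p}) \rceil$ for a sufficiently large constant $C$, we obtain $q_\ell \geq \tfrac{1}{4}$ for all $\ell \geq \ell^*$. In particular, the tail probability $\rho_{\ell-1} := \pr{L \geq \ell}$ satisfies $\rho_{\ell-1} \leq (3/4)^{\ell - \ell^*}$ for $\ell \geq \ell^*$, yielding almost sure termination.

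The moment bounds then reduce to a standard summation. The total cost when the algorithm stops at iteration $L$ is $T = \Theta\bigl(\sum_{k=1}^L \lambda^k\bigr) = \Theta(\lambda^L)$. Splitting the sums at $\ell^*$ and using $\lambda^{2\ell^*} = \Theta(1/p)$,
\begin{align*}
\ex{T^2} &\lesssim \sum_{\ell \leq \ell^*} \lambda^{2\ell} + \sum_{\ell > \ell^*} (3/4)^{\ell - \ell^*} \lambda^{2\ell} = O(\lambda^{2\ell^*}) = O(1/p),\\
\ex{1/T} &\lesssim \sum_{\ell \leq \ell^*} q_\ell\, \lambda^{-\ell} + \sum_{\ell > \ell^*} (3/4)^{\ell - \ell^*} \lambda^{-\ell} \lesssim p\, \lambda^{\ell^*} + \lambda^{-\ell^*} = O(\sqrt{p}),
\end{align*}
where the geometric tail $\sum_j (3/4)^j \lambda^{2j}$ converges because $\lambda^2 \cdot 3/4 = 0.9075 < 1$, and the upper-bound $q_\ell \leq O(\lambda^{2\ell} p)$ is what produces the $p\,\lambda^{\ell^*}$ term.

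The main obstacle is the lower bound on $q_\ell$: the oscillatory cosine sum must be shown to cancel to $O(1/\sqrt{p})$ via the complex-exponential identity, and the constants ($\lambda = 1.1$, the threshold defining $\ell^*$, and the inequality $\lambda^2 \cdot (3/4) < 1$) need to be tuned consistently so that the regime where $q_\ell \geq 1/4$ is entered before the geometric tail in $\ex{T^2}$ diverges. Minor bookkeeping is also required for the smallest values of $\ell$, where the interval $[\lambda^{\ell-1}, \lambda^\ell - 1]$ may contain no or only one integer, but this affects only a constant number of initial iterations and is easily absorbed in the constants.
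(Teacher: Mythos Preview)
Your proposal is correct and follows essentially the same route as the paper: both arguments use the identity $\sin^2 x = (1-\cos 2x)/2$ and the closed form of the cosine sum to show $q_\ell \geq 1/4$ once $\lambda^\ell \gtrsim 1/\sin(2\theta) \asymp 1/\sqrt{p}$, then exploit $(3/4)\lambda^2 < 1$ for the geometric tail in $\ex{T^2}$ and the bound $q_\ell \leq O(\lambda^{2\ell}\theta^2)$ for the head in $\ex{1/T}$. The only points the paper makes explicit that you leave implicit are the separate handling of the regime $\theta \geq \pi/4$ (where $p \geq 1/2$ and everything is $O(1)$) and the precise trigonometric identity for the cosine sum; your treatment of the small-$\ell$ integer-interval issue and the split at $\ell^*$ is arguably tidier than the paper's.
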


\begin{proof}
  Let $0 < \theta \leq \pi/2$ be the angle such that $\sqrt{p} = \sin \theta$.
  We show the theorem in the case where $\theta < \pi/4$ (the case $\theta \geq \pi/4$ is easy to handle separately). We first prove the inequality $\ex{T^2} \leq \bo{1/p}$. Let $P_{\ell}$ denote the probability of obtaining $\ket{\psi_1}$ when $\nb$ is picked uniformly at random between $\lambda^{\ell-1}$ and $\lambda^{\ell} - 1$ and the state $\mathcal{A}(U,\Pi,\nb) \ket{0}$ is measured with respect to $\{I-\Pi,\Pi\}$. Let $\ell_+ = \left\lceil \log_{\lambda}\pt*{\frac{\lambda}{(\lambda-1)\sin(2\theta)}} \right\rceil$. If $\ell \geq \ell_+$ then,
    \begin{align*}
     P_{\ell} & = \frac{1}{(\lambda - 1)\lambda^{\ell-1}} \sum_{\nb = \lambda^{\ell-1}}^{\lambda^{\ell} - 1} \sin^2((2\nb+1)\theta) \tag*{by Theorem~\ref{Thm:AA}} \\
       & = \frac{1}{(\lambda - 1)\lambda^{\ell-1}} \sum_{\nb = \lambda^{\ell-1}}^{\lambda^{\ell} - 1} \frac{1 - \cos^2((2\nb+1)2\theta)}{2} \tag*{by a trigonometric identity} \\
       & \geq \frac{1}{2} - \frac{1}{2(\lambda - 1)\lambda^{\ell-1}} \sum_{\nb = 0}^{\lambda^{\ell} - 1} \cos^2((2\nb+1)2\theta) \\
       & = \frac{1}{2} - \frac{\sin(4\lambda^{\ell}\theta)}{4(\lambda-1)\lambda^{\ell-1}\sin(2\theta)} \tag*{by a trigonometric identity} \\
       & \geq \frac{1}{4} \tag*{by $\ell \geq \left\lceil \log_{\lambda}\pt*{\frac{\lambda}{(\lambda-1)\sin(2\theta)}} \right\rceil$}
    \end{align*}
  Moreover, the algorithm has used at most $\sum_{\nb = 1}^{\ell} \lambda^{\nb} \leq 10 \lambda^{\ell+1}$ applications of $U$, $U^\dagger$ and $I-2\Pi$ after~$\ell$ iterations of step 3. Consequently, $\ex{T^2} \leq \sum_{\ell \geq \ell_+} \pt{10 \lambda^{\ell+1}}^2 (3/4)^{\ell - \ell_+} \leq \bo{\lambda^{2\ell_+}} \leq \bo{1/p}$.

  Next, we prove the inequality $\ex{1/T} \leq \bo{\sqrt{p}}$. Let $\ell_- = \left\lceil \log_{\lambda}\pt*{\frac{1}{5\theta}} \right\rceil$. We will use that $(2/\pi)x \leq \sin(x) \leq x$ for all $x \in [0,\pi/2]$. According to Theorem~\ref{Thm:AA}, if $\nb$ is an integer between $\lambda^{\ell-1}$ and $\lambda^{\ell} - 1$ for some $1 \leq \ell \leq \ell_-$, then the probability of obtaining $\ket{\psi_1}$ when $\mathcal{A}(U,\Pi,\nb) \ket{0}$ is measured is $\sin^2((2\nb+1)\theta) \leq (2 \lambda^\ell-1)^2\theta^2$. Moreover, after~$\ell$ iterations of step 3, the algorithm used at least $\lambda^{\ell-1}$ applications of $U$, $U^\dagger$, $I-2\Pi$ in total. Consequently, $\ex{1/T} \leq \sum_{\ell = 1}^{\ell_-} \frac{(2 \lambda^\ell-1)^2\theta^2}{\lambda^{\ell-1}} + \frac{1}{\lambda^{\ell_-}} \leq \bo{\lambda^{\ell_-} \theta^2 + \theta} \leq \bo{\sqrt{p}}$.
\end{proof}

  \subsection{Amplitude estimation}
  The amplitude estimation algorithm~\cite{BHMT02j} is a generalization of Quantum Counting to the problem of estimating the success probability of an algorithm. The next result corresponds to Theorems 11 and 12 in~\cite{BHMT02j}.

\begin{theorem}[\sc Amplitude estimation, \cite{BHMT02j}]
  \label{Thm:AE}
  Let $U$ be a unitary quantum algorithm and let $\Pi$ be a projection operator. Define the number $p \in [0,1]$ such that $p = \norm{\Pi U \qub}^2$.
  Then, for any integer $\nb \geq 0$, the \emph{amplitude estimation} algorithm $\aest(U,\Pi,\nb)$ outputs an amplitude estimate $\td{p}$ such that,
    \[\pr*{\abs*{\td{p} - p} \leq \frac{2\pi \sqrt{p(1-p)}}{\nb} + \frac{\pi^2}{\nb^2}} \geq 8/\pi^2.\]
  The algorithm uses $\nb$ applications of $U$, $U^\dagger$, $I-2\Pi$ and $\bo{\log^2(\nb)}$ $2$-qubit quantum gates.
\end{theorem}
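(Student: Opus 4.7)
The plan is to derive the statement via quantum phase estimation applied to the Grover--Szegedy walk operator associated with $(U,\Pi)$, essentially reproducing the argument of Brassard--H{\o}yer--Mosca--Tapp. First I would set up the geometric picture: write $U\qub = \cos(\theta)\ket{\psi_0} + \sin(\theta)\ket{\psi_1}$ with $\sin^2(\theta) = p$ and $\theta \in [0,\pi/2]$, and define the rotation operator $Q = -(I-2U\proj{\mathbf 0}U^\dagger)(I-2\Pi)$. A direct calculation shows that $Q$ preserves the two-dimensional subspace $\mathcal{S} = \mathrm{span}(\ket{\psi_0},\ket{\psi_1})$ and acts there as a rotation by angle $2\theta$, so its two eigenvalues in $\mathcal{S}$ are $e^{\pm 2 i \theta}$ with eigenvectors $\ket{\psi_{\pm}} = (\ket{\psi_0} \mp i \ket{\psi_1})/\sqrt 2$, and $U\qub = \frac{e^{i\theta}}{\sqrt 2}\ket{\psi_+} + \frac{e^{-i\theta}}{\sqrt 2}\ket{\psi_-}$.

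Next I would invoke the standard quantum phase estimation routine with $\ceil{\log_2 \nb}$ ancilla qubits, using $\nb$ controlled applications of $Q$ and an inverse QFT on the ancilla register. Running this on the state $U\qub$ yields, after measurement of the ancilla, an integer $y \in \set{0,1,\dots,\nb-1}$ from which one reads off $\td\theta = \pi y/\nb$ and outputs $\td p = \sin^2(\td\theta)$. The classical QPE analysis (Theorems~11--12 of~\cite{BHMT02j}) gives that with probability at least $8/\pi^2$ the outcome satisfies $|\td\theta - \theta| \le \pi/\nb$, where the angle is taken modulo $\pi$ so the $\pm\theta$ ambiguity of the two eigenvalues does not matter.

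The remaining step is a purely trigonometric conversion from an angle bound to an amplitude bound. Using the identity $\sin^2(\td\theta) - \sin^2(\theta) = \sin(\td\theta-\theta)\sin(\td\theta+\theta)$ and writing $\td\theta = \theta + \Delta$ with $|\Delta| \le \pi/\nb$, I would bound
\[
  |\sin(\td\theta+\theta)| = |\sin(2\theta)\cos\Delta + \cos(2\theta)\sin\Delta| \le 2\sqrt{p(1-p)} + |\sin\Delta|,
\]
while $|\sin(\td\theta-\theta)| \le |\Delta| \le \pi/\nb$. Multiplying these gives the claimed bound $|\td p - p| \le \frac{2\pi\sqrt{p(1-p)}}{\nb} + \frac{\pi^2}{\nb^2}$ with the stated success probability. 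The complexity accounting is immediate: each controlled-$Q$ consumes a constant number of calls to $U$, $U^\dagger$, and $I-2\Pi$, for $\nb$ such calls in total, and the inverse QFT on $\bo{\log \nb}$ qubits uses $\bo{\log^2 \nb}$ two-qubit gates.

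The main obstacle, and the only part that requires genuine work rather than bookkeeping, is the QPE success-probability bound of $8/\pi^2$: this requires the explicit Dirichlet-kernel computation for the residual amplitude after QFT, together with care that $\theta$ and $\pi-\theta$ (which give the same $p$) are both acceptable outputs so that one can assume $\theta \in [0,\pi/2]$ without loss. Everything else reduces to the elementary trigonometric manipulation above.
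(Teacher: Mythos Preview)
Your proposal is correct and reproduces the standard Brassard--H{\o}yer--Mosca--Tapp argument. Note, however, that the paper does not give its own proof of this statement: it is quoted verbatim as Theorems~11--12 of~\cite{BHMT02j} and used as a black box, so there is no ``paper's proof'' to compare against beyond the original reference you are already sketching.
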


We present a sequential version of the amplitude estimation algorithm that does not need a time parameter $\nb$ as input. This result was first obtained by \cite[Theorem 15]{BHMT02j}. We describe a variant with additional properties that is based on the sequential amplitude amplification algorithm. It is used in Proposition~\ref{Prop:SeqZeroOne} and Theorem~\ref{Thm:SeqEstim}.

\begin{theorem}[\sc Sequential amplitude estimation]
  \label{Thm:SeqAE}
  There exists an algorithm, called the \emph{sequential amplitude estimation} algorithm $\saest$, with the following properties.
  Let $U$ be a unitary quantum algorithm and let $\Pi$ be a projection operator. Define the number $p \in [0,1]$ such that $p = \norm{\Pi U \ket{0}}^2$.
  Then, the algorithm $\saest(U,\Pi)$ outputs an amplitude estimate $\td{p}$ and uses a number $T$ of applications of $U$, $U^\dagger$, $I-2\Pi$ such that,
    \begin{enumerate}
      \item There is a universal constant $c \in (0,1)$ such that $\pr{\abs{\td{p} - p} \leq c p} \geq 7/8$.
      \item There is a universal constant $c'$ such that $\ex{T^2} = \ex{1/\td{p}} \leq c'/p$.
      \item There is a universal constant $c''$ such that $\ex{1/T} = \ex{\sqrt{\td{p}}} \leq c''\sqrt{p}$.
    \end{enumerate}
\end{theorem}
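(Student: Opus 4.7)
The plan is to reuse the sequential amplitude amplification algorithm $\saamp(U,\Pi)$ from Theorem~\ref{Thm:SeqAA} and read an amplitude estimate directly off its running time. Concretely, $\saest(U,\Pi)$ invokes $\saamp(U,\Pi)$, lets $T$ denote the total number of applications of $U$, $U^\dagger$, and $I-2\Pi$ consumed during that call, and outputs $\td{p} = 1/T^2$.

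With this choice, properties~(2) and~(3) are essentially tautological: the identities $1/\td{p} = T^2$ and $\sqrt{\td{p}} = 1/T$ hold by construction, so the equalities stated in the theorem are immediate, and the inequality parts $\ex{T^2} \leq c'/p$ and $\ex{1/T} \leq c''\sqrt{p}$ are exactly the bounds proved in Theorem~\ref{Thm:SeqAA}. The substantive work is property~(1), which reduces to showing that $T = \Theta(1/\sqrt{p})$ with probability at least~$7/8$. Markov's inequality on $\ex{T^2}$ yields $T^2 \leq 16 c'/p$ with probability $\geq 15/16$, hence $\td{p} \geq p/(16c')$. For the matching lower bound on $T$, I would go beyond a blunt Markov on $\ex{1/T}$ by using the explicit formula $\sin^2((2\nb+1)\theta)$ for the per-round success probability of amplitude amplification: since $\sin^2 x \leq x^2$, the total probability that $\saamp$ terminates in any round $\ell$ with $\lambda^\ell \theta$ below a chosen threshold $\alpha$ is bounded by $O(\alpha^2)$, which can be forced below $1/16$ by picking $\alpha$ small. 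A union bound then pins $T$ inside a fixed window $[A/\sqrt{p}, B/\sqrt{p}]$ with probability $\geq 7/8$, which places $\td{p}$ within a constant-factor window of $p$.

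The main obstacle I anticipate is calibrating the constants tightly enough to fit that window inside $((1-c)p,(1+c)p)$ for some $c \in (0,1)$, rather than some looser interval $[p/C_1, C_2 p]$ with $C_2$ possibly exceeding~$2$. The parameters of $\saamp$ (the geometric base $\lambda = 1.1$ and the random schedule for $\nb$) were tuned in Theorem~\ref{Thm:SeqAA} for moment bounds, not relative-error concentration, so I expect to retune them here --- for instance by shrinking $\lambda$ so that consecutive rounds differ by less, or by shifting the initial round of $\saamp$ so that the modal stopping time lines up more precisely with $1/\sqrt{p}$. Any such retuning preserves the two identities $\ex{T^2} = \ex{1/\td{p}}$ and $\ex{1/T} = \ex{\sqrt{\td{p}}}$ used in properties~(2) and~(3), so only the concentration bound~(1) needs to be reverified after the adjustment.
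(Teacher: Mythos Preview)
Your proposal is exactly the paper's approach: run $\saamp(U,\Pi)$, record its running time $T$, and output $\td{p} = 1/T^{2}$, so that properties~(2) and~(3) reduce to the moment bounds of Theorem~\ref{Thm:SeqAA} and property~(1) comes from Markov's inequality. The paper's own proof is just those two sentences and does not engage with the $c\in(0,1)$ issue you flag; your extra care there (tail-bounding the early-termination probability via $\sin^2((2\nb+1)\theta)\leq ((2\nb+1)\theta)^2$, or retuning $\lambda$) is a reasonable way to secure what the paper leaves implicit.
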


\begin{proof}
  The algorithm $\saest(U,\Pi)$ consists of recording the number $T$ of applications of $U$, $U^\dagger$, $I-2\Pi$ used by the sequential amplitude amplification algorithm $\saamp(U,\Pi)$ (Theorem~\ref{Thm:SeqAA}), and choosing the estimate $\td{p} = 1/T^2$. The results follow immediately from Theorem~\ref{Thm:SeqAA} and Markov's inequality.
\end{proof}


\end{document}